\newtheorem{theorem}{Theorem}
\newtheorem{lemma}[theorem]{Lemma}
\begin{document}
\title{Multi-Polarization Superposition Beamforming:\\Novel Scheme of Transmit Power Allocation and Subcarrier Assignment
\thanks{This work was supported financially by Aerospace Corporation (G255621100) and CSULB Foundation Fund (RS261-00181-10185).}
\thanks{Paul Oh and Sean Kwon are co-first authors with the equal contribution, and Part of this work was presented at IEEE VTC Fall 2020 \cite{Kwon_Oh_VTC20Fall}. }
}

\author{Paul Oh and Sean Kwon{$^\dag$}\\
Department of Electrical Engineering, California State University, Long Beach\\
Email: Paul.Oh@student.csulb.edu;
{$^\dag$}Sean.Kwon@csulb.edu} 


\maketitle

\markboth{IEEE Transactions on Wireless Communications,~Vol.~xx, No.~xx,~Month~Year}{Shell \MakeLowercase{\textit{et al.}}: Bare Demo of IEEEtran.cls for Journals}

\begin{abstract}
The 5th generation (5G) new radio (NR) access technology and the beyond-5G future wireless communication require extremely high data rate and spectrum efficiency. Energy-efficient transmission/reception schemes are also regarded as an important component. The polarization domain has attracted substantial attention in this aspects. This paper is the first to propose \textit{multi-polarization superposition beamforming (MPS-Beamforming)} with cross-polarization discrimination (XPD) and cross-polarization ratio (XPR)-aware transmit power allocation utilizing the 5G NR antenna panel structure. The appropriate orthogonal frequency division multiplexing (OFDM) subcarrier assignment algorithm is also proposed to verify the theoretical schemes via simulations. The detailed theoretical derivation along with comprehensive simulation results illustrate that the proposed novel scheme of MPS-Beamforming is significantly beneficial to the improvement of the performance in terms of the symbol error rate (SER) and signal-to-noise ratio (SNR) gain at the user equipment (UE). For instance, a provided practical wireless channel environment in the simulations exhibits 8 dB SNR gain for $10^{-4}$ SER in a deterministic channel, and 4 dB SNR gain for $10^{-5}$ SER in abundant statistical channel realizations.
\end{abstract}


\begin{IEEEkeywords}
Polarization, multi-polarization superposition beamforoming (MPS-Beamforming), power allocation, subcarrier assignment, 5G/beyond-5G wireless communication.
\end{IEEEkeywords}

\IEEEpeerreviewmaketitle


\section{Introduction}
The 5th generation (5G) wireless communication system has been first commercialized in 2020, and it is expected to be stabilized during the next decade. 5G new radio (NR) access technology is a furnace of almost all the renowned communication theories and technologies. On the other hand, future wireless communication systems such as beyond-5G and 6th generation (6G) will demand far higher channel capacity and spectrum efficiency than 5G. For this reason, the utilization of the polarization domain has recently attracted substantial attention \cite{Kwon_Oh_VTC20Fall, Hanzo_Pol_Hybrid_BF, Pratt_TWC20_PolSK, Zhang_TCom20, Zafari_Sari_TCom17, Kwon_IGESSC20_SM, Nardelli_Ding_Cost_TWC19_MP_NOMA, Paul_Kwon_Molisch_arXiv_AntSel}. The polarization diversity enables the system performance to be substantially enhanced in a variety of communication schemes such as beamforming \cite{Kwon_Oh_VTC20Fall, Hanzo_Pol_Hybrid_BF}; polarization shift keying \cite{Pratt_TWC20_PolSK, Zhang_TCom20}; spatial modulation (SM) \cite{Zhang_TCom20, Zafari_Sari_TCom17, Kwon_IGESSC20_SM}; non-orthogonal multiple access (NOMA) \cite{Nardelli_Ding_Cost_TWC19_MP_NOMA}; and spatial multiplexing with antenna selection \cite{Paul_Kwon_Molisch_arXiv_AntSel}.

The current state-of-the-art and future key technologies significantly take into account utilizing the polarization domain.
Hybrid beamforming can adopt dual-polarization and the associated codebook design to improve the system performance \cite{Hanzo_Pol_Hybrid_BF, Kim_Love10}.
Modulation schemes can be also advanced with the improved BER/SER focusing on different polarization state of the wireless channel, for which the polarization shift keying is introduced and described in a sophisticated manner \cite{Zhang_TCom20, Pratt_TWC20_PolSK}.
Spatial modulation (SM) can also take the benefit of polarization diversity via adopting fixed dual-polarization antenna elements as in \cite{Zafari_Sari_TCom17} or flexible polarization-agile/reconfigurable antenna elements as in \cite{Kwon_IGESSC20_SM}.
Further, it is shown that deploying dual-polarized massive multi-input multi-output (MIMO) systems improve the performance of the conventional non-orthogonal multiple access (NOMA) scheme \cite{Nardelli_Ding_Cost_TWC19_MP_NOMA}. Comprehensive analyses and simulations for the outage probability and outage sum-rate of the dual-polarized massive MIMO-NOMA networks are provided in \cite{Nardelli_Ding_Cost_TWC19_MP_NOMA}.
The significant improvement of channel capacity in the polarization reconfigurable MIMO (PR-MIMO) system with practical polarization-reconfigurable/agile antenna elements and polarization pre/post-coding, is theoretically derived and validated in simulations \cite{Kwon_Molisch_GLOBECOM15, Paul_Kwon_Molisch_arXiv_AntSel}.

Other aspects of interesting research on MIMO system with the polarization diversity have been actively fulfilled \cite{Kwon_Stuber14_PDMA_TWC, Jootar06, Erceg_Paulraj02, White05, Andrews01, Kwon14_GLOBECOM, Shafi_Molisch06}. Utilizing the polarization domain has significant potential to achieve the aforementioned mission for the future wireless communication system, since even 5G NR is not fully utilizing the polarization domain in a systematic manner, e.g., based on full channel state information (CSI) of polarization.
The polarization diversity improves bit/symbol error rate (BER/SER) \cite{Jootar06, Erceg_Paulraj02, White05, Kwon_Stuber14_PDMA_TWC}. Further, it can also increase the channel capacity \cite{Andrews01, Kwon14_GLOBECOM, Shafi_Molisch06}. It is validated by the aforementioned research works that the impact of polarization on the performance of MIMO communication system is noticeable.

It is straightforward that the polarization domain can be combined with other domains such as time and spatial domains \cite{Kwon_Molisch_GLOBECOM15, Paul_Kwon_Molisch_arXiv_AntSel, Jootar06, White05, Erceg_Paulraj02}, although polarization multiplexing without spatial diversity is also promising \cite{Kwon_Stuber14_PDMA_TWC}. The improvement of the SER is achieved via combining spatial and polarization diversity along with maximum ratio combining (MRC) in the scenarios of fast power control and no power control on Nakagami-$m$ fading channels \cite{Jootar06}. The improvement of channel capacity in the MIMO system with polarization-reconfigurable/agile antennas is also precisely described \cite{Kwon_Molisch_GLOBECOM15, Paul_Kwon_Molisch_arXiv_AntSel}.
On the other hand, it is reported that space-time block coding (STBC) with uni-polarization outperforms STBC with dual-polarization in both uncorrrelated/correlated Rayleigh and Ricean fading channels \cite{White05}. As demonstrated in the previously reported literature, the characteristics of the wireless channel substantially affect the system performance in terms of the symbol/bit error rate (SER/BER) \cite{White05, Erceg_Paulraj02}. Deploying dual-polarized antennas in the MIMO system exhibits lower spatial diversity gain but higher spatial multiplexing gain than the MIMO system with uni-polarized antennas \cite{Erceg_Paulraj02}. In particular, in the Ricean fading channel with the high $K$-factor, the dual-polarized MIMO system is highly beneficial for spatial multiplexing, compared to the uni-polarized MIMO system.
Resource allocation schemes also need to take the impact of polarization into account.

It is no doubt that the polarization domain has its unique characteristics distinct from the spatial domain in MIMO
\cite{Kwon_Molisch_GLOBECOM15, Shafi_Molisch06, Kwon_Stuber11_TVT, Kwon_Stuber13_TVT}. The wireless communication system with multi-polarization needs to fully utilize the unique characteristics of polarized channels, in the same fashion that conventional MIMO system exploits spatial diversity of wireless channels.
In other words, the advantage of multi-polarization antenna elements significantly depends on the condition of wireless channels \cite{Jootar06, White05, Erceg_Paulraj02}. The reason is that different conditions of wireless channels cause different degrees of channel depolarization. The electromagnetic plane waves transmitted with a fixed polarization, e.g., $-45^{\rm o}$ polarization at the gNB have both the copolarization ($-45^{\rm o}$) and the cross-polarization ($+45^{\rm o}$) components at the end of the UE; the ratio of those two components is varying depending on the channel environment even for the static UE. This symptom is called channel depolarization, and reported by the empirical and theoretical research \cite{Shafi_Molisch06, Erceg06, Landmann07, Kwon_Stuber11_TVT}. The polarization misalignment between the transmitter (Tx) and the receiver (Rx) degrades the system performance in several aspects including the received signal power.
Comprehensive understanding of polarization/depolarization in wireless communication systems from the measurement and modeling to the theory and novel schemes have been provided in aforementioned prior works.

One of the key features that 5G NR and beyond-5G communication systems support, is the full utilization of MIMO beamforming, where massive MIMO antenna panels will be supported by the base station (BS) or equivalently, the next-generation Node B (gNB) as illustrated in Fig. \ref{fig:System_Model}.
Each antenna panel has multiple collocated dual-polarization antenna elements, where 8-by-8 dual-polarization antenna array or its subarray has been regarded as one of the strong candidates in 5G NR \cite{Roh_5GBeamforming_ComMag14, Nam_FD-MIMO, Nam_FD-MIMO_Feasibility_JSAC17}. Further, the implementation of polarization-reconfigurable/agile antennas has been accomplished in a variety of design approaches, and has shown the feasibility of supporting multi-polarization in MIMO communication systems \cite{qin2017compound, wolosinski20162, babakhani2016frequency, sun2016novel, cai2017continuously, liao2015polarization}.

In addition to the high channel capacity and spectrum efficiency, beyond-5G and 6G communication systems will demand  high energy efficiency as considered in \cite{Kwon_Oh_VTC20Fall, Zhang_TCom20, Zafari_Sari_TCom17, Kwon_IGESSC20_SM}. Consequently, this is the time for leading researchers and scholars to consider novel energy-efficient wireless communication schemes in addition to the current state-of-the-art technology in 5G NR. Based on the agreements achieved so far by the 5G standard society, 5G beamforming is supported by antenna subarray that consists of several spatially separated antenna elements in one column of the antenna panel with the same polarization, whether it is $+45^{\rm o}$ or $-45^{\rm o}$ polarization \cite{Roh_5GBeamforming_ComMag14, Nam_FD-MIMO, Nam_FD-MIMO_Feasibility_JSAC17}. Although the gNB in 5G NR will support dual-beamforming with both $+45^{\rm o}$ and $-45^{\rm o}$ polarization generated by the aforementioned antenna subarray, the current 5G NR design does not consider the impact of dual-beamforming on the polarization of the superimposed received signal at the end of user equipment (UE), which was previously or conventionally called mobile station (MS). Based on this motivation, this paper provides novel schemes along with the comprehensive analysis and simulation results for the impact of multi-polarization superposition beamforming (MPS-Beamforming) on the symbol error rate (SER) and the energy efficiency in terms of the signal-to-noise ratio (SNR) to meet a certain SER at the UE based on the orthogonal frequency division multiplexing (OFDM) system.

Compared with the prior works, this paper has unique
contributions. In particular, comparing to \cite{Kwon_Oh_VTC20Fall}, this paper provides substantially additional contributions such as more theoretical derivation and analysis to reach the closed form, new subcarrier assignment algorithm, and abundant simulation results with both deterministic and statistical channels.  Appreciable contributions of this paper are summarized as
follows:
\begin{itemize}
  \item providing novel scheme of MPS-Beamforming based on cross-polarization discrimination (XPD) and cross-polarization ratio (XPR)-aware transmit power allocation aligned with 5G antenna structure;
  \item illustrating the theoretical derivation of the XPD/XPR-aware transmit power allocation ratio in MPS-beamforming to yield the highest SER performance for the given channel;
  \item theoretical analysis for the impact of combining transmit beams with different polarization and transmit power ratio on the polarization ellipse and its rotation;
  \item proposing a new subcarrier assignment algorithm to take into account MPS-Beamforming scheme in the OFDM system;
  \item comprehensive simulation results and analyses illustrating the remarkable benefit of adopting the novel MPS-Beamforming scheme with XPD/XPR-aware transmit power allocation and subcarrier assignment; and the impact of XPD on the rotation angle of the polarization ellipse for the received signal.
\end{itemize}

The remainder of this paper is as follows.
Section~\ref{sec:System_Model} presents the system model for the energy-efficient MPS-Beamforming. Theoretical description and mathematical derivation for XPD/XPR-aware transmit power allocation, polarization ellipse, feedback-based fine adjustment of polarization are provided in the subsections in Section \ref{sec:MPS-Beamforming}. The appropriate subcarrier assignment algorithm for the MPS-Beamforming OFDM system is also presented in a subsection of Section \ref{sec:MPS-Beamforming}. Further, Section~\ref{sec:Simulation_Results} provides abundant simulations in both the deterministic and statistical channels. Finally, Section~\ref{sec:Conclusion} concludes the paper.

\section{System Model}
\label{sec:System_Model}
We take into account the 5G NR system, where the gNB deploys multiple antenna panels to support transmit beamforming as described in Fig. \ref{fig:System_Model}. Based on the current research trend of assumptions and 5G NR standards, the antenna panel has multiple antenna elements; each cross represents collocated dual-polarization antenna elements with fixed polarization, $\pm 45^{\rm o}$ \cite{Roh_5GBeamforming_ComMag14, Nam_FD-MIMO, Nam_FD-MIMO_Feasibility_JSAC17, Kwon_Oh_VTC20Fall}.
Both the line-of-sight (LoS) and non-line-of-sight (NLoS) components are considered as effective ones on the received signal resulted in MPS-Beamforming \cite{Roh_5GBeamforming_ComMag14, Nam_FD-MIMO, Nam_FD-MIMO_Feasibility_JSAC17}.

The gNB in Fig. \ref{fig:System_Model} utilizes uniform linear phase antenna subarrays, i.e., antenna elements in a column with single polarization among $\pm 45^{\rm o}$. Further, the gNB can transmit a single data stream utilizing dual Tx beams, which have $+45^{\rm o}$ and $-45^{\rm o}$ polarization at the side of the gNB. On the other hand, the receive antenna polarization at the UE can be changed owing to the movement of the user. Without the loss of generality, the polarization vector of the received signal supported by $-45^{\rm o}$ polarization beamforming at the gNB is set to be aligned with a vector $\overline{a_x}$, and $\overline{a_y}$ is for the received signal's polarization caused by $+45^{\rm o}$ polarization beamforming at the gNB. The polarization of the received signals generated by $\pm 45^{\rm o}$ polarization beamforming can be regarded as orthogonal in particular, for the LoS path \cite{Kwon_Stuber11_TVT}. For the NLoS path, two transmit beams with orthogonal polarization have changed polarization after being reflected; however, the theoretical research on polarized channels in \cite{Kwon_Stuber11_TVT} reports that changed polarization maintains the orthogonality, based on the detailed mathematical analysis in the geometry based channel model along with good agreement with other empirical report. \cite{Kwon_Stuber11_TVT,Landmann07}

\begin{figure}[!t]
	\centering
\includegraphics[width=0.35\textwidth]{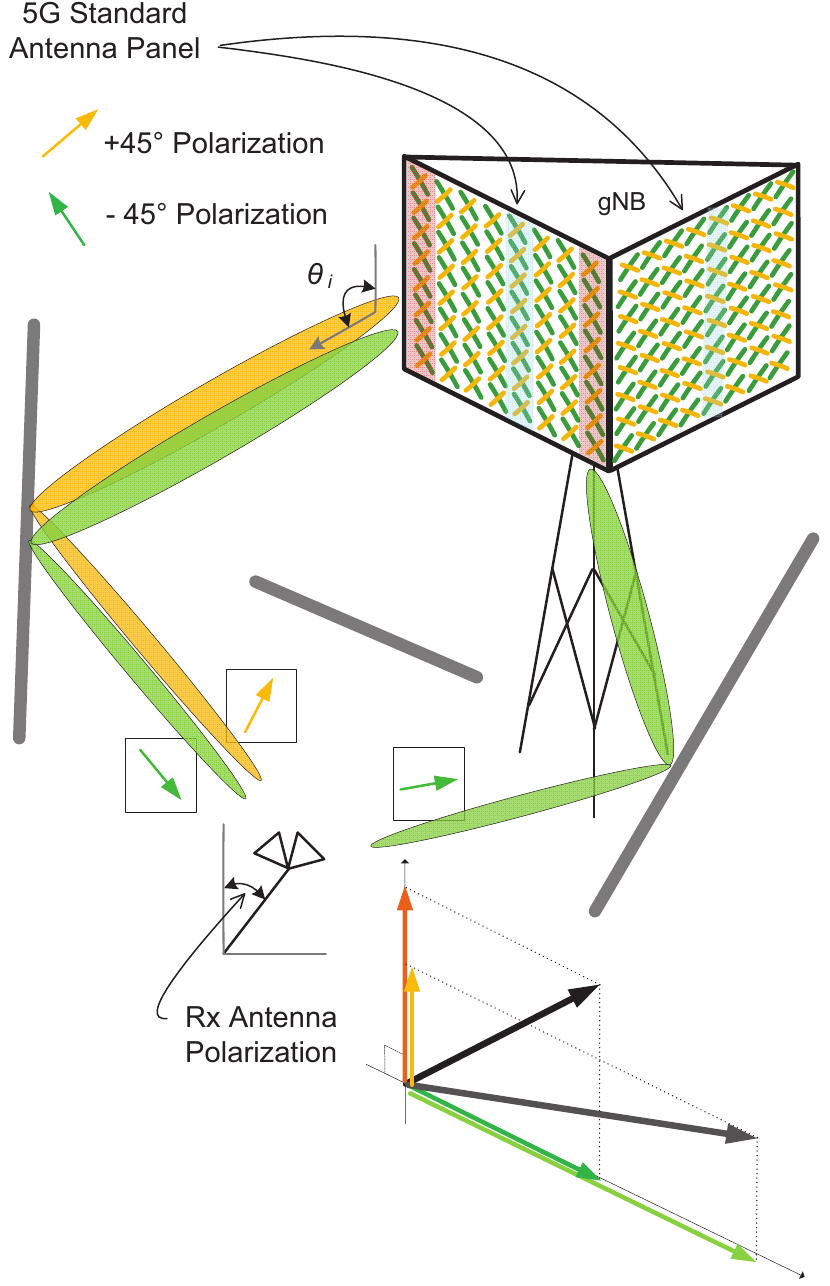}
	\caption{5G beamforming supported by multiple antenna panels at the gNB side agreed by the standard society.}
	\label{fig:System_Model}
\end{figure}

The received signals at the UE, supported by $-45^{\rm o}$ and $+45^{\rm o}$ polarization transmit beamforming, $r^{-45^{\rm o}}(t, \tau_1)$ and $r^{+45^{\rm o}}(t, \tau_2)$, respectively, can be expressed as \cite{book_Antenna_Milligan}
\setlength{\arraycolsep}{0.14em}
\begin{eqnarray}
 \label{eq:V_pol}
 r^{-45^{\rm o}}(t, \tau_1) &=& \overline{a_x} {~} E_x(t,\tau_1) {~} A(\theta_1) \nonumber\\
   &=& \overline{a_x} {~} E^{-45^{\rm o}} A(\theta_1) {~} \cos(2{\pi}f_c (t-\tau_1) + {\phi}_1) {~~~}\\ 
 r^{+45^{\rm o}}(t, \tau_2) &=& \overline{a_y} {~} E_y(t,\tau_2) {~} A(\theta_2) \nonumber\\
   &=& \overline{a_y} {~} E^{+45^{\rm o}} A(\theta_2) {~} \cos(2{\pi}f_c (t-\tau_2) + {\phi}_2), {~~~}
 \label{eq:H_pol}  
\end{eqnarray}
where for $i \in \{ 1, 2 \}$,
\begin{eqnarray}
 A(\theta_i) &=& \frac{\sin( N \psi_i /2 )}{\sin( \psi_i /2 )}{~}, {~~}\psi_i = \frac{2 \pi}{\lambda} d \cos \theta_i + \zeta_i {~}.
 \label{eq:Array_Factor}
\end{eqnarray}
\setlength{\arraycolsep}{5pt}In (\ref{eq:V_pol}) -- (\ref{eq:Array_Factor}), $E_x(t,\tau_1)$ and $E_y(t,\tau_2)$ represent polarization components of $r^{-45^{\rm o}}(t, \tau_1)$ in (\ref{eq:V_pol}) and $r^{+45^{\rm o}}(t, \tau_2)$ in (\ref{eq:H_pol}), respectively \cite{book_Antenna_Milligan}. Further, $\tau_i$, $\phi_i$, and $\theta_i$ for $i \in \{ 1, 2 \}$ are the path delay, random phase component, and the angle between the line of the linear phase array and the direction of radio propagation, respectively. Further, $f_c$ is a carrier frequency; $\zeta_i$ is the phase shift between progressive elements in the phase antenna array; and $A(\theta_i)$ is called antenna array pattern or array factor \cite{book_Antenna_Milligan}. In (\ref{eq:Array_Factor}), $\lambda$ and $d$ are the wavelength and the distance between the consecutive elements in the phase antenna array. Lastly, $E^{-45^{\rm o}}$ and $E^{+45^{\rm o}}$ are the amplitudes of the received signals excluding the antenna array factors in the directions of $\overline{a_x}$ and $\overline{a_y}${~}. Hence,
\setlength{\arraycolsep}{0.14em}
\begin{eqnarray}
  \label{eq:Final_Amplitude}
  E_x(t,\tau_1) &=& E^{-45^{\rm o}} \cos(2{\pi}f_c (t-\tau_1) + {\phi}_1),\\
  E_y(t,\tau_2) &=& E^{+45^{\rm o}} \cos(2{\pi}f_c (t-\tau_2) + {\phi}_2).
\end{eqnarray}

\setlength{\arraycolsep}{5pt}The baseband signal representation is used  for both theoretical analysis and simulation in this paper, while (\ref{eq:V_pol}) -- (\ref{eq:Array_Factor}) focus on the expression of polarization for radio propagation. Aligned with 4G LTE and 5G NR standards, the OFDM system is the fundamental scenario in this paper, and the baseband signals are regarded as the result of LoS along with non-line-of-sight (NLoS) components.  After RF demodulation and discrete Fourier transformation (DFT) at the Rx, the received signals on the $n$-th OFDM subcarrier at the antennas of the Rx with $-45^{\rm o}$ and $45^{\rm o}$ polarization, $Y_{n}^{-45^{\rm o}}$ and $Y_{n}^{45^{\rm o}}$, respectively, can be obtained. In the case that no cross-polarized signal exists, $Y_{n}^{-45^{\rm o}}$ and $Y_{n}^{45^{\rm o}}$, are respectively,
\setlength{\arraycolsep}{0.14em}
\begin{eqnarray}
 \label{eq:Y_n_info_V-Pol}
 Y_{n}^{-45^{\rm o}} &=& \sqrt{E_{s}} H_{n}^{(-45^{\rm o}, -45^{\rm o})} s(n) + w_{n}^{-45^{\rm o}} ~, \\
 \label{eq:Y_n_info_H-Pol}
 Y_{n}^{45^{\rm o}} &=& \sqrt{E_{s}} H_{n}^{(45^{\rm o}, 45^{\rm o})} s(n) + w_{n}^{45^{\rm o}} ~,
\end{eqnarray}
\setlength{\arraycolsep}{5pt}where $E_{s}$ is the energy of the transmitted information symbol $s(n)$, and $n$ is the OFDM subcarrier index. $w_{n}^{-45^{\rm o}}$ and $w_{n}^{45^{\rm o}}$ are, respectively, the noise at the Rx antennas with $-45^{\rm o}$ and $45^{\rm o}$ polarization. Finally, $H_{n}^{(-45^{\rm o}, -45^{\rm o})}$ and $H_{n}^{(45^{\rm o}, 45^{\rm o})}$ are frequency-domain channel coefficients between the Tx and Rx antennas with $-45^{\rm o}$ and $45^{\rm o}$ polarization, respectively. The cross-polarized received signal components will be taken into account in detail in Section \ref{sec:MPS-Beamforming}.

\section{Energy-Efficient Multi-Polarization Superposition Beamforming (MPS-Beamforming)}
\label{sec:MPS-Beamforming}

The system model in Section \ref{sec:System_Model} implies that the polarization of the superimposed transmit beams is primarily dependent on $E^{-45^{\rm o}}$ and $E^{+45^{\rm o}}$ in (\ref{eq:V_pol}) -- (\ref{eq:H_pol}). In this section, we propose the novel scheme of determining transmit power allocation ratio to match the polarization of the superimposed received signal to the Rx antenna polarization in a statistical sense. It is also verified in Section \ref{sec:Simulation_Results} that the proposed scheme in this section shows the best performance in terms of SER or SNR gain.

\subsection{Novel Scheme of XPD/XPR-Aware Transmit Power Allocation for MPS-Beamforming}
Superimposing two transmit beams with orthogonal polarization, $\pm 45^{\rm o}$, causes different polarization from $\pm 45^{\rm o}$ at the Rx of the UE. Furthermore, tuning the transmit power allocation ratio between two Tx beams with $\pm 45^{\rm o}$ polarization must reconfigurate the polarization of the received signal, due to the cross-polarization component of the received signal. The superposition of two transmit beams with $\pm 45^{\rm o}$ polarization causes the Rx to receive the signal with a variety of polarization via changing the transmit power allocation ratio. For this reason, the proposed scheme is called multi-polarization superposition beamforming (MPS-beamforming) in this paper. It is noteworthy that the proposed superposition beamforming scheme with numerous transmit power ratios supports multiple polarization of the signal at the Rx.
 
Based on this motivation, this section begins with the mathematical representation of the received signal caused by superimposing two transmit beams with $\pm 45^{\rm o}$ polarization and the transmit power allocation ratio between those transmit beams. The OFDM system is the fundamental scenario in both the theoretical analysis and simulation in this paper to be aligned with the present 4G LTE and 5G NR standards.

Based on the aforementioned rationale, the $-45^{\rm o}$ and $45^{\rm o}$ polarization components of the received signal on the $n$-th OFDM subcarrier, $Y_{n}^{-45^{\rm o}}$ and $Y_{n}^{45^{\rm o}}$ are expressed as
\setlength{\arraycolsep}{0.14em}
\begin{eqnarray}
 \label{eq:Y_n_info_V-Pol}
 Y_{n}^{-45^{\rm o}} &=& \sqrt{\alpha E_{s}} H_{n}^{(-45^{\rm o}, -45^{\rm o})} s(n) \nonumber\\
   &&+ \sqrt{\beta E_{s}} H_{n}^{(-45^{\rm o}, 45^{\rm o})} s(n) + w_{n}^{-45^{\rm o}} ~, \\
 \label{eq:Y_n_info_H-Pol}
 Y_{n}^{45^{\rm o}} &=& \sqrt{\alpha E_{s}} H_{n}^{(45^{\rm o}, -45^{\rm o})} s(n) \nonumber\\
   &&+ \sqrt{\beta E_{s}} H_{n}^{(45^{\rm o}, 45^{\rm o})} s(n) + w_{n}^{45^{\rm o}} ~,
\end{eqnarray}
\setlength{\arraycolsep}{5pt}where the transmit power allocation ratio, $\alpha$ and $\beta$ satisfy $\beta = 1 - \alpha$.  In the scenario without the proposed MPS-beamforming, $\alpha = 1 {\rm ~or~} 0$; whereas in the scenario of MPS-Beamforming, $0 \leq \alpha \leq 1$; consequently, $0 \leq \beta = 1 - \alpha \leq 1$.
The XPD of the received signal is the power ratio between the copolarization component to cross-polarization component of the received signal, and the statistical XPD of the received signal resulted by MPS-beamforming, $\overline{\rm XPD}^{\rm MPS}$ is defined as
\setlength{\arraycolsep}{0.14em}
\begin{eqnarray}
  \overline{\rm XPD}^{\rm MPS} &\triangleq&
     \frac{ {\rm E} \big[ | \sqrt{\alpha E_{s}} H_{n}^{(-45^{\rm o}, -45^{\rm o})} +
        \sqrt{\beta E_{s}} H_{n}^{(-45^{\rm o}, 45^{\rm o})} |^2 \big] }
        { {\rm E} \big[ | \sqrt{\alpha E_{s}} H_{n}^{(45^{\rm o}, -45^{\rm o})} +
        \sqrt{\beta E_{s}} H_{n}^{(45^{\rm o}, 45^{\rm o})} |^2 \big] }, {~~~~~}
     \label{eq:XPD_MPS_Definition}
\end{eqnarray}
\setlength{\arraycolsep}{5pt}where the operation $\rm E [\cdot]$ is the expectation or equivalently, mean over whole subcarriers.

For further analysis of (\ref{eq:XPD_MPS_Definition}), the statistical Rx cross-polarization discrimination (XPD) without MPS-Beamforming over all subcarriers for the Tx antenna polarization, $-45^{\rm o}$ and $45^{\rm o}$ is defined as
\setlength{\arraycolsep}{0.14em}
\begin{eqnarray}
 \label{eq:XPD_Stat_N}
 &&\overline{\rm XPD}^{\rm N} = \overline{\rm XPD}^{(- 45^{\rm o})} 
   \triangleq \frac{ {\rm E}[ |H_{n}^{(-45^{\rm o}, -45^{\rm o})}|^2 ] }{ ~{\rm E}[ |H_{n}^{(45^{\rm o}, {~}-45^{\rm o})}|^2 ] ~} \\
 \label{eq:XPD_Stat_N_2}
 &&{~~~~~~~~~~~~~~~~~~~~~~~} =\frac{ \sum \limits_{n=1}^{N} \big| H_{n}^{(-45^{\rm o}, - 45^{\rm o})} \big|^2 }{ \sum \limits_{n=1}^{N} \big| H_{n}^{(45^{\rm o}, - 45^{\rm o})} \big|^2 }, \\
 &&\overline{\rm XPD}^{\rm P} = \overline{\rm XPD}^{(45^{\rm o})}
 \label{eq:XPD_Stat_P}   
   {~}\triangleq \frac{ {\rm E}[ |H_{n}^{(-45^{\rm o}, 45^{\rm o})}|^2 ] }{ ~{\rm E}[ |H_{n}^{(45^{\rm o}, {~}45^{\rm o})}|^2 ] ~} \\
 \label{eq:XPD_Stat_P_2}    
 &&{~~~~~~~~~~~~~~~~~~~~~~} = \frac{ \sum \limits_{n=1}^{N} \big| H_{n}^{(-45^{\rm o}, 45^{\rm o})} \big|^2 }{ \sum \limits_{n=1}^{N} \big| H_{n}^{(45^{\rm o}, 45^{\rm o})} \big|^2 },   
\end{eqnarray}
\setlength{\arraycolsep}{5pt}where the superscript N or P denotes $-45^{\rm o}$ or $+45^{\rm o}$ polarization at the Tx antenna; not at the Rx antenna. As described in (\ref{eq:XPD_Stat_N}) and (\ref{eq:XPD_Stat_P}), we define the statistical Rx XPD as the ratio of mean channel gain at the $-45^{\rm o}$ polarized Rx antenna to the mean channel gain at the $45^{\rm o}$ polarized Rx antenna for a given Tx antenna whether it has $-45^{\rm o}$ or $+45^{\rm o}$ polarization. Since the total number of subcarriers in the overall OFDM system is the same regardless of the Rx antenna polarization, we can reach (\ref{eq:XPD_Stat_N_2}) and (\ref{eq:XPD_Stat_P_2}) from (\ref{eq:XPD_Stat_N}) and (\ref{eq:XPD_Stat_P}), respectively.

The received signal power ratio at the Rx antenna with $-45^{\rm o}$ polarization for the signals from Tx antennas with $-45^{\rm o}$ and $45^{\rm o}$ polarization is defined as Rx cross-polarization ratio (XPR). In the similar manner with the statistical Rx XPD, we define the statistical Rx XPR as
\setlength{\arraycolsep}{0.14em}
\begin{eqnarray}
 \label{eq:XPR_Stat_N}
  &&\overline{\rm XPR}^{\rm N} = \overline{\rm XPR}^{(-45^{\rm o})}
    \triangleq \frac{ {\rm E}[ |H_{n}^{(-45^{\rm o}, -45^{\rm o})}|^2 ] }{ ~{\rm E}[ |H_{n}^{(-45^{\rm o}, {~}45^{\rm o})}|^2 ] ~} {~} \\
 \label{eq:XPR_Stat_N_2}
  &&{~~~~~~~~~~~~~~~~~~~~~~~} = \frac{ \sum \limits_{n=1}^{N} \big| H_{n}^{(-45^{\rm o}, - 45^{\rm o})} \big|^2 }{ \sum \limits_{n=1}^{N} \big| H_{n}^{(- 45^{\rm o}, 45^{\rm o})} \big|^2 }, \\ 
  \label{eq:XPR_Stat_P} 
  && \overline{\rm XPR}^{\rm P} = \overline{\rm XPR}^{(45^{\rm o})}
  {~}\triangleq \frac{ {\rm E}[ |H_{n}^{(45^{\rm o}, -45^{\rm o})}|^2 ] }{ ~{\rm E}[ |H_{n}^{(45^{\rm o}, {~}45^{\rm o})}|^2 ] ~} {~} \\
 \label{eq:XPR_Stat_P_2}
  &&{~~~~~~~~~~~~~~~~~~~~~~} = \frac{ \sum \limits_{n=1}^{N} \big| H_{n}^{(45^{\rm o}, - 45^{\rm o})} \big|^2 }{ \sum \limits_{n=1}^{N} \big| H_{n}^{(45^{\rm o}, 45^{\rm o})} \big|^2 } {~}.
\end{eqnarray}
\setlength{\arraycolsep}{5pt}In (\ref{eq:XPR_Stat_N}) and (\ref{eq:XPR_Stat_P}), the statistical Rx XPR can be interpreted as the ratio between the mean channel gain from the $-45^{\rm o}$ polarized Tx antenna to the given Rx antenna; and the mean channel gain from the $45^{\rm o}$ polarized Tx antenna to the same Rx antenna.
We directly use (\ref{eq:XPD_Stat_N_2}), (\ref{eq:XPD_Stat_P_2}), (\ref{eq:XPR_Stat_N_2}) and (\ref{eq:XPR_Stat_P_2}) to reach (\ref{eq:XPD_MPS_Expression}) from the definition of $\overline{\rm XPD}^{\rm MPS}$ in (\ref{eq:XPD_MPS_Definition}). 

The following Lemma describes the relation between $\overline{\rm XPD}^{\rm N}$, $\overline{\rm XPD}^{\rm P}$, $\overline{\rm XPR}^{\rm N}$ and $\overline{\rm XPR}^{\rm P}$, and is utilized to derive XPD/XPR-aware transmission power ratio, $\alpha_{\rm XPD/XPR}$ and $\beta_{\rm XPD/XPR}$ in (\ref{eq:Optimal_alpha}).
\begin{lemma}
  \begin{equation}
    \frac{ ~\overline{\rm XPD}^{\rm N} }{ ~\overline{\rm XPD}^{\rm P} } = \frac{ ~\overline{\rm XPR}^{\rm N} }{ ~\overline{\rm XPR}^{\rm P} }
  \end{equation}
\end{lemma}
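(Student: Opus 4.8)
The plan is to establish the identity by direct substitution, since every quantity on both sides of the claim is a ratio assembled from the \emph{same} four mean channel gains. First I would introduce compact notation for the four second moments appearing in (\ref{eq:XPD_Stat_N})--(\ref{eq:XPR_Stat_P_2}): let $a = {\rm E}[|H_{n}^{(-45^{\rm o}, -45^{\rm o})}|^2]$, $b = {\rm E}[|H_{n}^{(45^{\rm o}, -45^{\rm o})}|^2]$, $c = {\rm E}[|H_{n}^{(-45^{\rm o}, 45^{\rm o})}|^2]$, and $d = {\rm E}[|H_{n}^{(45^{\rm o}, 45^{\rm o})}|^2]$, where the first superscript is the Rx polarization and the second is the Tx polarization. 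With this notation the defining relations (\ref{eq:XPD_Stat_N}), (\ref{eq:XPD_Stat_P}), (\ref{eq:XPR_Stat_N}) and (\ref{eq:XPR_Stat_P}) become simply $\overline{\rm XPD}^{\rm N}=a/b$, $\overline{\rm XPD}^{\rm P}=c/d$, $\overline{\rm XPR}^{\rm N}=a/c$ and $\overline{\rm XPR}^{\rm P}=b/d$.

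Next I would form the two ratios in the statement and simplify each separately. The left-hand side gives $\overline{\rm XPD}^{\rm N}/\overline{\rm XPD}^{\rm P} = (a/b)/(c/d) = ad/(bc)$, and the right-hand side gives $\overline{\rm XPR}^{\rm N}/\overline{\rm XPR}^{\rm P} = (a/c)/(b/d) = ad/(bc)$. Since both collapse to the identical quantity $ad/(bc)$, the two are equal and the lemma follows. The intuition is that the $2\times 2$ table of mean channel gains $\{a,b,c,d\}$ possesses a single diagonal-to-antidiagonal quantity $ad/(bc)$; dividing the two XPD definitions compares the columns of this table, dividing the two XPR definitions compares its rows, and both operations necessarily recover the same product $ad/(bc)$.

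The hard part is not analytic at all --- it is purely the bookkeeping of the polarization superscripts. The only thing I would need to verify with care is that the numerator/denominator assignments in (\ref{eq:XPD_Stat_N})--(\ref{eq:XPR_Stat_P_2}) genuinely share the four entries $a,b,c,d$ across the XPD and XPR definitions, i.e.\ that $\overline{\rm XPD}^{\rm N}$ and $\overline{\rm XPR}^{\rm N}$ share the numerator $a$, that $\overline{\rm XPD}^{\rm P}$ and $\overline{\rm XPR}^{\rm P}$ share the denominator $d$, and so on. Once this shared-entry structure is confirmed from the definitions, no further argument is required: the identity is a direct algebraic consequence of how the four statistical quantities are built from the same underlying mean channel gains.
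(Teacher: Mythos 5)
Your proof is correct and follows essentially the same route as the paper's: both arguments expand $\overline{\rm XPD}^{\rm N}/\overline{\rm XPD}^{\rm P}$ and $\overline{\rm XPR}^{\rm N}/\overline{\rm XPR}^{\rm P}$ via the definitions in (\ref{eq:XPD_Stat_N})--(\ref{eq:XPR_Stat_P}) and observe that rearranging the same four mean channel gains yields the identical quantity (your $ad/(bc)$ is exactly the paper's rearrangement step from (\ref{eq:Lemma_XPD_XPR}) to (\ref{eq:Lemma_XPD_XPR_2})). Your $a,b,c,d$ shorthand and the remark about the $(\mathrm{Rx},\mathrm{Tx})$ superscript convention are just a cleaner presentation of the same bookkeeping.
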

\begin{proof}
\setlength{\arraycolsep}{0.14em}
  \begin{eqnarray}
  \label{eq:Lemma_XPD_XPR}
    \frac{ ~\overline{\rm XPD}^{\rm N} }{ ~\overline{\rm XPD}^{\rm P} }
     {}&=&{} \frac{ ~ {\rm E}[ |H_{n}^{(-45^{\rm o}, -45^{\rm o})}|^2 ] ~ \big/ ~ {\rm E}[ |H_{n}^{(45^{\rm o}, -45^{\rm o})}|^2 ] ~ }{ ~ {\rm E}[ |H_{n}^{(-45^{\rm o}, 45^{\rm o})}|^2 ] ~ \big/ ~ {\rm E}[ |H_{n}^{(45^{\rm o}, 45^{\rm o})}|^2 ] ~ } \\
  \label{eq:Lemma_XPD_XPR_2}    
    {}&=&{} \frac{ ~ {\rm E}[ |H_{n}^{(-45^{\rm o}, -45^{\rm o})}|^2 ] ~ \big/ ~ {\rm E}[ |H_{n}^{(-45^{\rm o}, 45^{\rm o})}|^2 ] ~ }{ ~ {\rm E}[ |H_{n}^{(45^{\rm o}, -45^{\rm o})}|^2 ] ~ \big/ ~ {\rm E}[ |H_{n}^{(45^{\rm o}, 45^{\rm o})}|^2 ] ~ } \\
  \label{eq:Lemma_XPD_XPR_3}  
    {}&=&{} \frac{ ~\overline{\rm XPR}^{\rm N} }{ ~\overline{\rm XPR}^{\rm P} }
  \end{eqnarray}
\end{proof}
\setlength{\arraycolsep}{5pt}\noindent The numerator and denominator in (\ref{eq:Lemma_XPD_XPR}) are from the definition of the statistical Tx XPD, $\overline{\rm XPD}^{\rm N}$ and $\overline{\rm XPD}^{\rm P}$ in (\ref{eq:XPD_Stat_N}) and (\ref{eq:XPD_Stat_P}), respectively; four expectation components in (\ref{eq:Lemma_XPD_XPR}) are rearranged as described in (\ref{eq:Lemma_XPD_XPR_2}). Finally, the numerator and denominator of (\ref{eq:Lemma_XPD_XPR_2}) are aligned with the definitions of $\overline{\rm XPR}^{\rm N}$ and $\overline{\rm XPR}^{\rm P}$ in (\ref{eq:XPR_Stat_N}) and (\ref{eq:XPR_Stat_P}), respectively. The ratio of statistical Rx XPD, $\big( \overline{\rm XPD}^{\rm N} / {~}\overline{\rm XPD}^{\rm P} \big)$ is the same as the ratio of statistical Rx XPR, $\big( \overline{\rm XPR}^{\rm N} /{~}\overline{\rm XPR}^{\rm P} \big)$ as presented in (\ref{eq:Lemma_XPD_XPR_3}).

Utilizing (\ref{eq:XPD_Stat_N}) -- (\ref{eq:Lemma_XPD_XPR_3}), $\overline{\rm XPD}^{\rm MPS}$ in (\ref{eq:XPD_MPS_Definition}) is derived as
\setlength{\arraycolsep}{0.14em}
\begin{eqnarray}
  &&\overline{\rm XPD}^{\rm MPS} \nonumber \\
  && \triangleq
     \frac{ {\rm E} \big[ | \sqrt{\alpha E_{s}} H_{n}^{(-45^{\rm o}, -45^{\rm o})} +
        \sqrt{\beta E_{s}} H_{n}^{(-45^{\rm o}, 45^{\rm o})} |^2 \big] }
        { {\rm E} \big[ | \sqrt{\alpha E_{s}} H_{n}^{(45^{\rm o}, -45^{\rm o})} +
        \sqrt{\beta E_{s}} H_{n}^{(45^{\rm o}, 45^{\rm o})} |^2 \big] }
     \label{eq:XPD_MPS_Definition_Repeat} \nonumber\\
  && = \frac{ \alpha + \beta \big/ {~}\overline{\rm XPR}^{\rm N} }
    { \alpha \big/ {~}\overline{\rm XPD}^{\rm N} + \beta \big/\big({~}\overline{\rm XPD}^{\rm P} {~} \overline{\rm XPR}^{\rm N} \big) } \label{eq:XPD_MPS_Expression} \\
  && = \frac { \alpha \big( 1 - 1 \big/ {~}\overline{\rm XPR}^{\rm N} \big)  + 1 \big/ {~}\overline{\rm XPR}^{\rm N}  }
    { \alpha \big( 1 \big/ \overline{\rm XPD}^{\rm N} - 1 \big/ \big(\overline{\rm XPD}^{\rm P} \overline{\rm XPR}^{\rm N} \big)  \big) + 1 \big/\big(\overline{\rm XPD}^{\rm P} \overline{\rm XPR}^{\rm N} \big) } {~}, \label{eq:XPD_MPS_Fn_of_alpha} \nonumber \\
  {~}  
\end{eqnarray}
\setlength{\arraycolsep}{5pt}where once again, $n$ is the subcarrier index. Substituting $1-\alpha$ for $\beta$, we can derive (\ref{eq:XPD_MPS_Fn_of_alpha}) from (\ref{eq:XPD_MPS_Expression}).  It is noteworthy that $\overline{\rm XPD}^{\rm MPS}$ is the function of $\alpha$, monotonically increasing or decreasing as shown in (\ref{eq:XPD_MPS_Fn_of_alpha}). 

Finally, the Tx allocates the transmission power to the Tx antennas with $-45^{\rm o}$ and $45^{\rm o}$ polarization in the conventional scenario of the fixed total transmission power constraint.
In other words, the Tx determines transmit power allocation ratio, $\alpha$ and consequently, $\beta = 1 - \alpha$; and the objective is to align $\overline{\rm XPD}^{\rm MPS}$ with the Rx antenna polarization, $\overline{\rm XPD}^{\rm Rx-Ant}$, i.e.,
\begin{equation}
  \overline{\rm XPD}^{\rm MPS} = \overline{\rm XPD}^{\rm Rx-Ant} {~}.
  \label{eq:Pol_matching}
\end{equation}
The Rx antenna polarization $\overline{\rm XPD}^{\rm Rx-Ant}$ can be changed by the rotation of antenna origination at the UE, caused by the movement of the UE. Nonetheless, it is worth of notice that the polarization state information (PSI) estimation such as XPD and XPR at the Rx can still be accomplished based on the rotated antenna origination. In this scenario, $\overline{\rm XPD}^{\rm Rx-Ant} = \infty$ or $\overline{\rm XPD}^{\rm Rx-Ant} = 0$ corresponding to the Rx antenna polarization angle $-45^{\rm o}$ or $45^{\rm o}$, respectively. In other words, the Rx does not need to maintain $-45^{\rm o}$ and $45^{\rm o}$ Rx antenna polarization angles in a physical sense; whereas, even the rotated antenna origination can be regarded as the basis of PSI estimation such as XPD and XPR. The change of the Rx antenna polarization angle from the basis of PSI estimation can be also measured by the gyroscope embedded in the current commercial UEs such as smartphones. The estimated PSI is reported to gNB through the feedback channel allocated for sharing the PSI in the similar manner with the present conventional CSI feedback.

Based on (\ref{eq:XPD_MPS_Fn_of_alpha}) and (\ref{eq:Pol_matching}), the theoretical XPD/XPR-aware transmission power allocation ratio, $\alpha_{\rm XPD/XPR}$ and $\beta_{\rm XPD/XPR}$ can be derived to have the closed form expression as following.
\setlength{\arraycolsep}{0.14em}
\begin{eqnarray}
 \label{eq:Optimal_alpha}
 \hat{\alpha}_{\rm XPD/XPR} {}&=&{} \dfrac{  \overline{{\rm XPD}}^{\rm Rx-Ant} - \overline{{\rm XPD}}^{\rm P}  }{  \overline{{\rm XPD}}^{\rm P}(\overline{{\rm XPR}}^{\rm N}-1)+\overline{{\rm XPD}}^{\rm Rx-Ant}(1-\overline{{\rm XPR}}^{\rm P})  }{~}, \nonumber \\
 ~\\
 \alpha_{\rm XPD/XPR} {}&=&{} \min \big\{1, {~}\max \{0, {~}\hat{\alpha}_{\rm XPD/XPR} \} \big\}{~}, \label{eq:Suboptimal_alpha}\\
 \beta_{\rm XPD/XPR} {}&=&{} 1- \alpha_{\rm XPD/XPR} {~} \label{eq:Optimal_beta}.
\end{eqnarray}
\setlength{\arraycolsep}{5pt}In the scenario that the theoretical XPD/XPR-aware transmit power ratio $\hat{\alpha}_{\rm XPD/XPR}$ is not in the range of $[0,1]$, the closest value to boundary values, zero or unity, is selected as (\ref{eq:Suboptimal_alpha}), for the reason that $\overline{\rm XPD}^{\rm MPS}$ is the function of $\alpha$, and monotonically increases or decreases with respect to $\alpha$ as shown in (\ref{eq:XPD_MPS_Fn_of_alpha}). 

\subsection{Rotation and Eccentricity of the Polarization Ellipse in MPS-Beamforming}
\label{sec:Pol_ellipse}

The mathematical derivation in this section describes that the polarization of MPS-Beamforming can be the elliptical polarization as portrayed in Fig. \ref{fig:Polarization_Ellipse} even for the line-of-sight (LoS) scenario. 

At the side of the UE, the superposition of the two signals, $r^{-45^{\rm o}}(t, \tau_1)$ and $r^{+45^{\rm o}}(t, \tau_2)$ in (\ref{eq:V_pol}) and (\ref{eq:H_pol}), respectively, impinges on the Rx antenna. That is,
\setlength{\arraycolsep}{0.14em}
\begin{eqnarray}
 \label{eq:r}
 r(t, \tau_1, \tau_2) &=& r^{-45^{\rm o}}(t, \tau_1) + r^{+45^{\rm o}}(t, \tau_2) \\
 &=&\overline{a_x}E^{-45^{\rm o}} A(\theta_1) \cos({\varphi})+\overline{a_y}E^{+45^{\rm o}} A(\theta_2) \cos({\varphi}+{\Delta}) \nonumber
\end{eqnarray}
where
\begin{eqnarray}
 \label{eq:phi_variance}
 \varphi &=& 2 \pi f_c (t - \tau_1) + {\phi}_1 {~}, \\
 \label{eq:Delta}
 \Delta &=& - 2 \pi f_c (\tau_2 - \tau_1) + ({\phi}_2-{\phi}_1) {~}.  
\end{eqnarray}
\setlength{\arraycolsep}{5pt}The phase difference between two received signals supported by $\pm 45^{\rm o}$ polarization beamforming is $\Delta$; even in the scenario of the line-of-sight (LoS) link between the Tx and Rx, the phase difference may not be zero due to the mismatching between $\pm 45^{\rm o}$ polarization Tx antenna elements.

It is noteworthy that gNB can rotate the polarization ellipse, i.e., the major and minor axes of the polarization ellipse at the UE side via utilizing the proposed scheme, MPS-Beamforming. Furthermore, MPS-Beamforming can also change the eccentricity of the polarization ellipse; as the polarization ellipse becomes narrow and long, i.e., eccentricity comes to be very high, more received signal power can be concentrated on the direction of polarization ellipse's major axis. 
\begin{figure}[ht]
	\centering
\includegraphics[width=0.41\textwidth, height=1.6in]{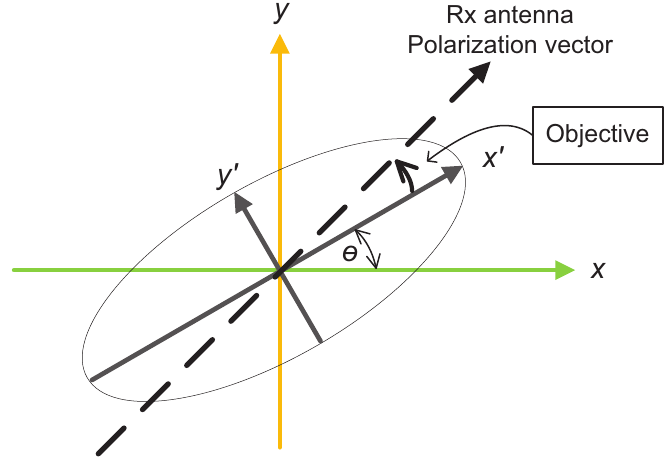}
	\caption{Polarization ellipse and the rotation of the ellipse.}
	\label{fig:Polarization_Ellipse}
\end{figure}

Based on (\ref{eq:V_pol}) -- (\ref{eq:H_pol}) and (\ref{eq:r}) with the auxiliary variables, $\varphi$ and $\Delta$ in (\ref{eq:phi_variance}) -- (\ref{eq:Delta}), utilizing the trigonometric identity for $\cos(\varphi + \Delta)$ in (\ref{eq:r}), the polarization of $r(t, \tau_1, \tau_2)$, the superimposed signal at the UE, satisfies that
\setlength{\arraycolsep}{0.14em}
\begin{eqnarray}
 \label{eq:H_pol_mod}
 &&\dfrac{E_y(t, \tau_2)}{E^{+45^{\rm o}}} = \cos{\varphi}\cos{\Delta}-\sin{\varphi}\sin{\Delta}, \\
 \label{eq:V_pol_mod} 
 &&\dfrac{E_x(t, \tau_1)}{E^{-45^{\rm o}}} = \cos{\varphi}.
\end{eqnarray}
\setlength{\arraycolsep}{5pt}Further, (\ref{eq:H_pol_mod}) and (\ref{eq:V_pol_mod}) can be modified as
\setlength{\arraycolsep}{0.14em}
\begin{eqnarray}
 \label{eq:H_V_pol_relation}
 && \bigg(\dfrac{E_x(t,\tau_1)}{E^{-45^{\rm o}}\sin{\Delta}} - \dfrac{\cos \varphi \cos \Delta}{\sin \Delta} \bigg)^{2} = \sin^2 \varphi \\
 \label{eq:sin^2_phi}
 &&{~~~} \cos{\varphi} = \dfrac{E_x(t, \tau_1)}{E^{-45^{\rm o}}}~,~~ \sin^2 {\varphi} = 1 - \bigg(\dfrac{E_x(t, \tau_1)}{E^{-45^{\rm o}}} \bigg)^2. 
\end{eqnarray}
\setlength{\arraycolsep}{5pt}

We plug in (\ref{eq:sin^2_phi}) to (\ref{eq:H_V_pol_relation}); then,
\setlength{\arraycolsep}{0.14em}
\begin{eqnarray} 
 \label{eq:Pre-Ellipse}  
 \bigg(\dfrac{E_x(t,\tau_1)}{E^{-45^{\rm o}}\sin{\Delta}}\bigg)^{2} &+& \bigg(\dfrac{E_y(t,\tau_2)}{E^{+45^{\rm o}}\sin{\Delta}}\bigg)^{2} \nonumber \\
 &-& 2\dfrac{E_x(t,\tau_1)E_y(t,\tau_2)}{E^{-45^{\rm o}} E^{+45^{\rm o}}}\dfrac{\cos{\Delta}}{\sin^{\rm2}{\Delta}}=1{~}, 
\end{eqnarray}
\setlength{\arraycolsep}{5pt}which is in concord with the expression of the rotated ellipse.
Finally, rotating the coordinates verifies that the polarization of the superimposed signal at the UE is elliptic as
\begin{equation}
 \label{eq:Ellipse}
 \bigg(\dfrac{E_x'(t, \tau_1)}{a}\bigg)^2 + \bigg(\dfrac{E_y'(t, \tau_2)}{b}\bigg)^2 = 1,
\end{equation}where
\setlength{\arraycolsep}{0.14em}\begin{eqnarray}
 \label{eq:Rotation matrix}
 \bigg[\begin{matrix}
  E_x'(t, \tau_1) \\  E_y'(t, \tau_2)
 \end{matrix}\bigg]  &=&
 \bigg[\begin{matrix} 
 {~~} \cos{\theta} & {~}\sin{\theta} {~} \\ -\sin{\theta} & {~}\cos{\theta} {~}
 \end{matrix}\bigg]   
 \bigg[\begin{matrix}
  E_x (t, \tau_1) \\ E_y (t, \tau_2) 
 \end{matrix}\bigg].
\end{eqnarray}\setlength{\arraycolsep}{5pt}

In the comparison of (\ref{eq:Pre-Ellipse}) with (\ref{eq:Ellipse}) after plugging in (\ref{eq:Rotation matrix}) into (\ref{eq:Ellipse}), the rotation angle of the polarization ellipse, $\theta$ and the squared eccentricity of the polarization ellipse, $\epsilon ^2$ are described as
\setlength{\arraycolsep}{0.14em}
\begin{eqnarray}
 \label{eq:Theta}
 {\theta} {}&=&{} \dfrac{1}{2}\tan^{\rm-1}\bigg(\dfrac{2\cos{\Delta}}{\tfrac{E^{-45^{\rm o}}}{E^{+45^{\rm o}}}-\tfrac{E^{+45^{\rm o}}}{E^{-45^{\rm o}}}}\bigg) \\
 \label{eq:Theta_2}
 &=& \dfrac{1}{2}\tan^{-1}\bigg(\dfrac{ 2\cos{\Delta} }{ \sqrt{\rm XPD} - 1/\sqrt{\rm XPD}}\bigg) {~}, \\ 
 \label{eq:Eccentricity}
 \epsilon ^2 &\triangleq & 1 - \dfrac{b^2}{a^2} \\
 \label{eq:Eccentricity_2}
   &=& 1 - \dfrac{\tfrac{E^{+45^{\rm o}}}{E^{-45^{\rm o}}}(1+\sec2{\theta})+\tfrac{E^{-45^{\rm o}}}{E^{+45^{\rm o}}}(1-\sec2{\theta})}{\tfrac{E^{+45^{\rm o}}}{E^{-45^{\rm o}}}(1-\sec2{\theta})+\tfrac{E^{-45^{\rm o}}}{E^{+45^{\rm o}}}(1+\sec2{\theta})} \\
 \label{eq:Eccentricity_3}
   &=& 1 - \frac{ 1 + \sec2{\theta} + {\rm XPD}(1-\sec2{\theta}) }{ 1 - \sec2{\theta} + {\rm XPD}(1+\sec2{\theta}) } \nonumber\\
 \label{eq:Eccentricity_4}  
   &=& \frac{ 2 ({\rm XPD} - 1) \sec2{\theta} }{ 1 - \sec2{\theta} + {\rm XPD}(1+\sec2{\theta}) } {~}.
\end{eqnarray}
\setlength{\arraycolsep}{5pt}We define the instantaneous Rx XPD as
\begin{equation}
 \label{eq:Rx_XPD}
 {\rm XPD} = \bigg( \dfrac{ E^{-45^{\rm o}} }{ E^{+45^{\rm o}} }\bigg)^2 ,
\end{equation}
and plug in the instantaneous Rx XPD to (\ref{eq:Theta}) and (\ref{eq:Eccentricity_2}) so that we can reach (\ref{eq:Theta_2}) and (\ref{eq:Eccentricity_4}). The XPD is the power ratio of two orthogonal polarization components, which are $E^{-45^{\rm o}}$ and $E^{+45^{\rm o}}$ in this section.
The fundamental definition of the eccentricity in the ellipse is expressed with the squared format in (\ref{eq:Eccentricity}). It is worth mentioning that both the rotation angle and the squared eccentricity of the polarization ellipse, $\theta$ and $\epsilon^2$, respectively, are the functions of $( E^{-45^{\rm o}}/E^{+45^{\rm o}} )$; therefore, the instantaneous Rx XPD.  The eccentricity, denoted by $\epsilon$, is the metric to estimate the deviation of the conic section from the circle.  The squared eccentricity, $\epsilon ^2$ is utilized in this paper for the simplicity of the expression rather than using the eccentricity, $\epsilon$; it varies from zero to unity. As $\epsilon ^2$ approaches zero, the ellipse converges to a circle; while as $\epsilon ^2$ approaches unity, the ellipse converges to a linear line.

Furthermore, the direct relation between the squared eccentricity, $\epsilon ^2$ and the rotation angle of the polarization ellipse, $\theta$ is derived as following via utilizing (\ref{eq:Theta_2}) and (\ref{eq:Eccentricity_4}).
\setlength{\arraycolsep}{0.14em}
\begin{eqnarray}
 \label{eq:Eccen(Theta)}
 {\epsilon ^2} &=& 1 - \dfrac{\sqrt {\cos^2{\Delta}+\tan^2{2\theta}} - \cos{\Delta}\sec{2\theta}}{\sqrt {\cos^2{\Delta}+\tan^2{2\theta}}+\cos{\Delta}\sec{2\theta}} \nonumber\\
 &=& \dfrac{ 2\cos{\Delta}\sec{2\theta}}{\sqrt {\cos^2{\Delta}+\tan^2{2\theta}}+\cos{\Delta}\sec{2\theta}} {~}. 
\end{eqnarray}
\setlength{\arraycolsep}{5pt}The relation between $\epsilon^{\rm2}$ and $\theta$ in (\ref{eq:Eccen(Theta)}) for a variety of $\Delta$ defined in (\ref{eq:Delta}) is illustrated in Section \ref{sec:Simulation_Results}. 

\subsection{Fine Adjustment of the XPD in MPS-Beamforming}
This section describes the manner of fine tuning for the XPD in MPS-Beamforming to cope with practical situations such as imperfect reconfiguration of antenna polarization and channel coefficients of subcarriers varying faster than the update of subcarrier assignment. The fine tuning based on the feedback loop can compensate for the imperfect realization of XPD in MPS-Beamforming.

The theoretical XPD/XPR-aware transmit power ratio, $\alpha_{\rm XPD/XPR}$, is determined by the statistical XPD and XPR over the OFDM subcarriers as described in (\ref{eq:Optimal_alpha}) -- (\ref{eq:Optimal_beta}). Further, the MPS-Beamforming OFDM system fulfills subcarrier assignment to be described in Section \ref{sec:Subcarrier_Alloc}. For the assigned subcarriers, the channel impulse response per subcarrier can be changed before fulfilling subcarrier re-assignment in the practical system. 

The gNB can adjust its transmit XPD, $\alpha / \beta$, or equivalently, its transmit polarization angle, i.e., $\tan^{-1} \big( \sqrt{ \alpha / \beta } ~\big)$ via assigning  transmit power ratio, $\alpha$ and $\beta$, to $-45^{\rm o}$ and $+45^{\rm o}$ polarization beamforming, such that the gNB controls the rotation angle and eccentricity of the Rx polarization ellipse caused by the MPS-Beamforming at the UE side. The mechanism of adjusting the receive polarization ellipsis is illustrated in Fig. \ref{fig:Pol_adjustment} considering a scenario without the loss of generality. Assuming the Rx antenna polarization angle, $\tan^{-1} \big( \sqrt{ \overline{\rm XPD}^{\rm Rx-Ant} } ~\big)$, is $75^{\rm o}$, the best scenario of polarization matching is that the mean receive polarization angle in MPS-Beamforming, $\tan^{-1} \big( \sqrt{ \overline{\rm XPD}^{\rm MPS} } ~\big)$, is $68^{\rm o}$ in Fig. \ref{fig:Pol_adjustment}.  $\overline{\rm XPD}^{\rm Rx-Ant}$ and $\overline{\rm XPD}^{\rm MPS}$ can have different values in general, relying on the wireless channel environment and the rotation of the Rx antenna caused by UE's movement.

\begin{figure}[ht]
	\centering
\includegraphics[width=0.48\textwidth, height=2.0in]{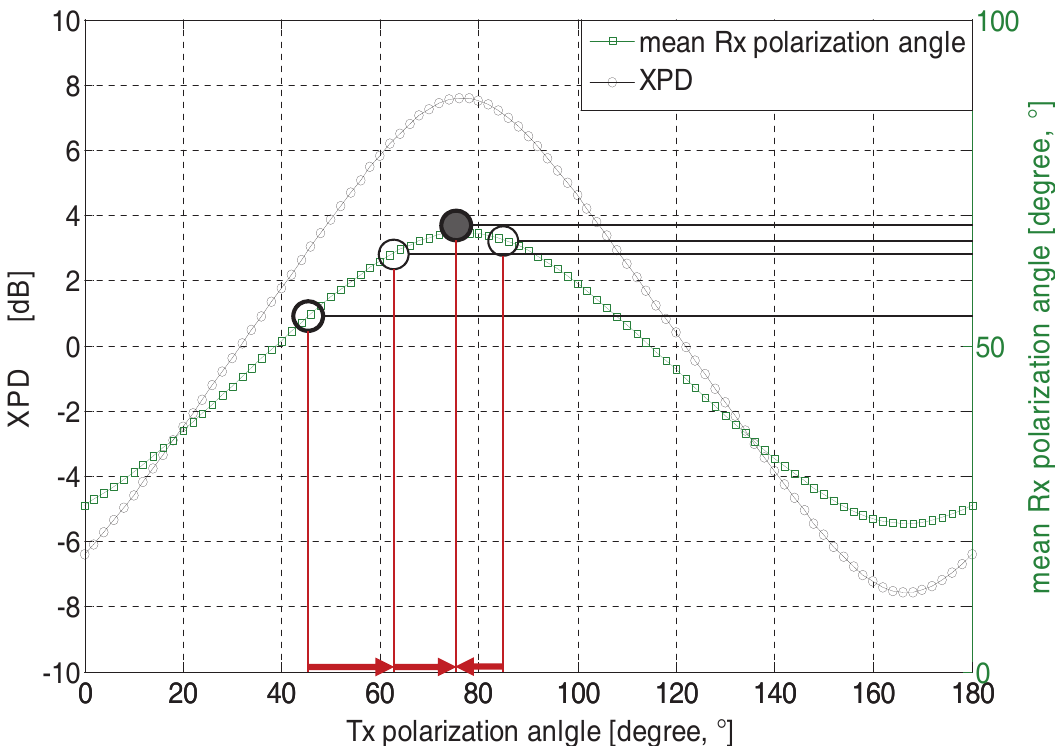}
	\caption{Adjustment of rotation angle and eccentricity of the Rx polarization ellipse.}
	\label{fig:Pol_adjustment}
\end{figure}

The UE Rx reports PSI in terms of the mean receive XPD and XPR estimated at the UE along with the current mean receive polarization angle to the gNB by the feedback channel. The estimation of the mean Rx XPD and XPR can be performed based on dual-polarized antenna elements at the Rx with the support of reference signals. In turn, the gNB changes its transmit polarization angle or transmit power ratio until the condition, $\tan^{-1} \big( \sqrt{ \overline{\rm XPD}^{\rm MPS} } ~\big)$ = $\tan^{-1} \big( \sqrt{ \overline{\rm XPD}^{\rm Rx-Ant} } ~\big)$, equivalent to (\ref{eq:Pol_matching}), or until   $\overline{\rm XPD}^{\rm MPS}$ approaches $\overline{\rm XPD}^{\rm Rx-Ant}$ as closely as possible. The UE and gNB perform iterations of PSI feedback followed by the associated fine tuning of the transmit polarization angle or equivalently, transmit power ratio. The bin of increasing/decreasing transmit polarization angle in Fig. \ref{fig:Pol_adjustment} can be dynamic depending on the difference between the reported current and previous mean receive polarization angles via the feedback channel.  The sophisticated algorithm for the adjustment of the bin of increasing/decreasing the transmit polarization angle is outside of the scope and contribution of this paper; it is our future work.


\subsection{Subcarrier Assignment for MPS-Beamforming}
\label{sec:Subcarrier_Alloc}
Subcarriers pertinent to the system/subsystem need to be selected relying on the scheme that the system/subsystem employs, and the gNB assigns subcarriers to each subsystem in its service coverage, in an efficient manner considering overall system performance and fairness. This paper proposes a new scheme of MPS-Beamforming with the consideration of the OFDM system; therefore, the corresponding algorithm of subcarrier assignment needs to be provided.
The proposed subcarrier assignment algorithm first, exclude subcarriers that have low channel gains. Then, the algorithm estimates $\overline{\rm XPD}^{\rm N}$, $\overline{\rm XPD}^{\rm P}$, $\overline{\rm XPR}^{\rm N}$ and $\overline{\rm XPR}^{\rm P}$ of the screened subcarriers; and the concomitant $\overline{\rm XPD}^{\rm MPS}$ in (\ref{eq:XPD_MPS_Definition}). Finally, the subcarriers closest to $\overline{\rm XPD}^{\rm MPS}$ are selected.


\begin{table}[!t]
	\renewcommand{\arraystretch}{1.3}
	\caption{MPS-Beamforming Subcarrier Assignment Algorithm}
	\label{Tbl:SubcarrierAlloc}
	\centering
	\begin{tabular}{|| c | p{7cm} ||}
		\hline
		\bfseries Step & \bfseries Execution\\
		\hline
		1 & Among $N_{\rm total}$ total number of subcarriers, screen out $\eta$-percentile subcarriers that have the low channel gain. In the simulation of this paper $N_{\rm total} = 2048$, and $\eta = 35$; they can be adapted to the channel condition.\\
		\hline
		2 & For the selected ($100-\eta$)-percentile subcarriers, estimate $\overline{\rm XPD}^{\rm N}$, $\overline{\rm XPD}^{\rm P}$, $\overline{\rm XPR}^{\rm N}$ and $\overline{\rm XPR}^{\rm P}$. Estimate the XPD/XPR aware transmit power allocation ratio, $\alpha_{\rm XPD/XPR}$ and $\beta_{\rm XPD/XPR}$ based on (\ref{eq:Optimal_alpha})-(\ref{eq:Optimal_beta}).\\
		\hline
		3 & Estimate $\overline{\rm XPD}^{\rm MPS}$ in (\ref{eq:XPD_MPS_Fn_of_alpha}), based on Step-2.\\
		\hline
		4 & Select $N$ number of subcarriers that have the per-subcarrier XPD in MPS-Beamforming, ${\rm XPD}_n^{\rm MPS}$ in (\ref{eq:Per-subcarrier_XPD}) closest to the estimated $\overline{\rm XPD}^{\rm MPS}$ in Step-3. $N$ is set to be 48 in the simulation of this paper, which is in agreement with the fundamental OFDM physical resource block (PRB) size in the current standards.\\
		\hline
	\end{tabular}
\end{table}

In the practical environment of the multipath fading or frequency selective fading channel, each subcarrier has different characteristics in terms of the channel gain and thus, Rx XPD and XPR. The subcarrier-dependent characteristics of the XPD/XPR should be coped with in the proposed MPS-Beamforming and transmit power allocation. Table \ref{Tbl:SubcarrierAlloc} provides novel heuristic algorithm of XPD/XPR-aware OFDM subcarrier assignment to efficiently support the proposed MPS-Beamforming. It is noteworthy that the heuristic methodology has been regarded as an applaudable approach and frequently utilized in OFDM subcarrier and other resource allocation. \cite{Cheng_MU_Subcarrier_Bit_Power_Alloc}. A single user is considered in this paper; meanwhile, multi-user (MU) OFDM will be considered for further elaborated subcarrier assignment scheme in future works.

As an additional metric based on which the gNB assigns subcarriers to the particular transmission, the per-subcarrier XPD, ${\rm XPD}_n^{\rm MPS}$ for the subcarrier index, $n$ is defined as
\setlength{\arraycolsep}{0.14em}
\begin{eqnarray}
  {\rm XPD}_n^{\rm MPS} {}&=&{} \frac{ ~ \big| \sqrt{\alpha E_{s}} H_{n}^{(-45^{\rm o}, -45^{\rm o})} + \sqrt{\beta E_{s}} H_{n}^{(-45^{\rm o}, 45^{\rm o})} \big|^2  }
        {  \big| \sqrt{\alpha E_{s}} H_{n}^{(45^{\rm o}, -45^{\rm o})} +
        \sqrt{\beta E_{s}} H_{n}^{(45^{\rm o}, 45^{\rm o})} \big|^2  } \nonumber \\
  {}&=&{} \frac{ ~ \big| \sqrt{\alpha} H_{n}^{(-45^{\rm o}, -45^{\rm o})} + \sqrt{\beta} H_{n}^{(-45^{\rm o}, 45^{\rm o})} \big|^2  }
        {  \big| \sqrt{\alpha} H_{n}^{(45^{\rm o}, -45^{\rm o})} +
        \sqrt{\beta} H_{n}^{(45^{\rm o}, 45^{\rm o})} \big|^2  } {~}.  \label{eq:Per-subcarrier_XPD}    
\end{eqnarray}

\setlength{\arraycolsep}{5pt}The subcarrier assignment algorithm in Table \ref{Tbl:SubcarrierAlloc} is applied in the simulation in Section \ref{sec:Simulation_Results}, and contributes to the improvement of the system performance in terms of SER; equivalently, in terms of energy efficiency or SNR gain for the target SER. The proposed MPS-Beamforming with the novel XPD/XPR-aware transmit power allocation ratio and XPD/XPR-aware subcarrier assignment algorithm yields the best performance in terms of the SER among any combination of two Tx beams with $\pm 45^{\rm o}$ polarization including the scenario of single Tx beam transmission with either $-45^{\rm o}$ or $45^{\rm o}$ polarization corresponding to $\alpha_{\rm XPD/XPR} = 1$ or $\alpha_{\rm XPD/XPR} = 0$, respectively. Further, the simulation results and analysis for the complexity of the proposed subcarrier assignment algorithm is provided with Fig. \ref{fig:Complexity_SubChAlloc} in Section \ref{sec:Simulation_Results}.

\section{Simulation Results}
\label{sec:Simulation_Results}
This section provides comprehensive simulation results and the associated analyses for the novel scheme of MPS-Beamforming with the XPD/XPR-aware transmit power allocation scheme and the subcarrier assignment algorithm.  Theoretical analyses in Section \ref{sec:MPS-Beamforming} are verified and strongly supported by the simulation results, via showing that adopting proposed schemes yields the best performance with the significant improvement in SER among possible transmit power allocation ratios. The abundant comparisons of the proposed scheme with the conventional scheme are accomplished via setting up the transmit power allocation ratio, $\alpha$. The conventional scheme of single beam transmission with either $-45^{\rm o}$ or $45^{\rm o}$ is included in the simulation results via realizing the scenario with $\alpha = 1$ or $\alpha = 0$, respectively. Furthermore, MPS-Beamforming with random transmit power allocation ratio is also considered to verify that the theoretically obtained XPD/XPR-aware transmit power allocation ratio shows the best performance in terms of SER. This paper is the first to propose superposition of two transmit beams with orthogonal polarization; therefore, the further comparison with other scheme of superimposing multi-polarization Tx beams is unfeasible.

Both the statistical and deterministic polarized wireless channels are utilized for the simulations based on \cite{Kwon_Stuber14_PDMA_TWC} and \cite{Kwon_Stuber11_TVT}, where the polarized channel model has been verified with the remarkable agreement with the previously reported empirical data from several channel sounding campaigns. The considered scenario is the urban or metropolitan area with the street canyon where the received signal's angles of arrival at the Rx follows verified distribution with the directivity, and beamforming and its gain are regarded as included in the channel realizations and the associated simulations based on \cite{Kwon_Stuber14_PDMA_TWC} and \cite{Kwon_Stuber11_TVT}.
Each SER curve in the statistical polarized wireless channels is the result of Monte Carlo simulation over 100 individual wireless channel realizations which have the same PSI in terms of XPD and XPR. In turn, for one individual channel realization, the simulation accomplishes $10^4$ iterations of estimating SER. This is the first paper to propose the novel MPS-Beamforming scheme; therefore, the simulations focus the primary consideration on the lowest modulation order, i.e., quadrature phase shift keying (QPSK) with the modulation order 2, in the current 5G NR standards. 

The significant improvement of the SER performance in the OFDM system adopting MPS-Beamforming with XPD/XPR-aware transmit power allocation, is illustrated in Figs. \ref{fig:Statistic_SER_0_RxAnt} and \ref{fig:Statistic_SER_0_RxAnt_WO_SubChAlloc}. The Rx antenna polarization is $45^{\rm o}$ in both figures. Following the 5G NR standard, the Rx antenna polarization angle can also be analyzed based on the axes of the coordinates, $-45^{\rm o}$ and $45^{\rm o}$, and $\overline{\rm XPD}^{\rm Rx-Ant} = \tan^2 (45^{\rm o} - {\rm Rx~antenna~polarization~angle})$ in linear scale. The transmit power allocation ratio, $\alpha_{\rm XPD/XPR}$ and $\beta_{\rm XPD/XPR}$, is determined based on (\ref{eq:Optimal_alpha}) -- (\ref{eq:Optimal_beta}).
The scenario of PSI in Fig. \ref{fig:Statistic_SER_0_RxAnt} is $\overline{\rm XPD}^{\rm N} = 5.48$ dB, $\overline{\rm XPD}^{\rm P} = -6.26$ dB, $\overline{\rm XPR}^{\rm N} = 5.90$ dB. In contrast, The scenario of PSI in Fig. \ref{fig:Statistic_SER_0_RxAnt_WO_SubChAlloc} is $\overline{\rm XPD}^{\rm N} = 5.48$ dB, $\overline{\rm XPD}^{\rm N} = 4.56$ dB, $\overline{\rm XPD}^{\rm P} = -4.15$ dB, $\overline{\rm XPR}^{\rm N} = 4.34$ dB.  All those PSI parameters are practical and reasonable \cite{Landmann07, Kwon_Stuber11_TVT}.

\begin{figure}[!t]
	\centering
\includegraphics[width=.47\textwidth]{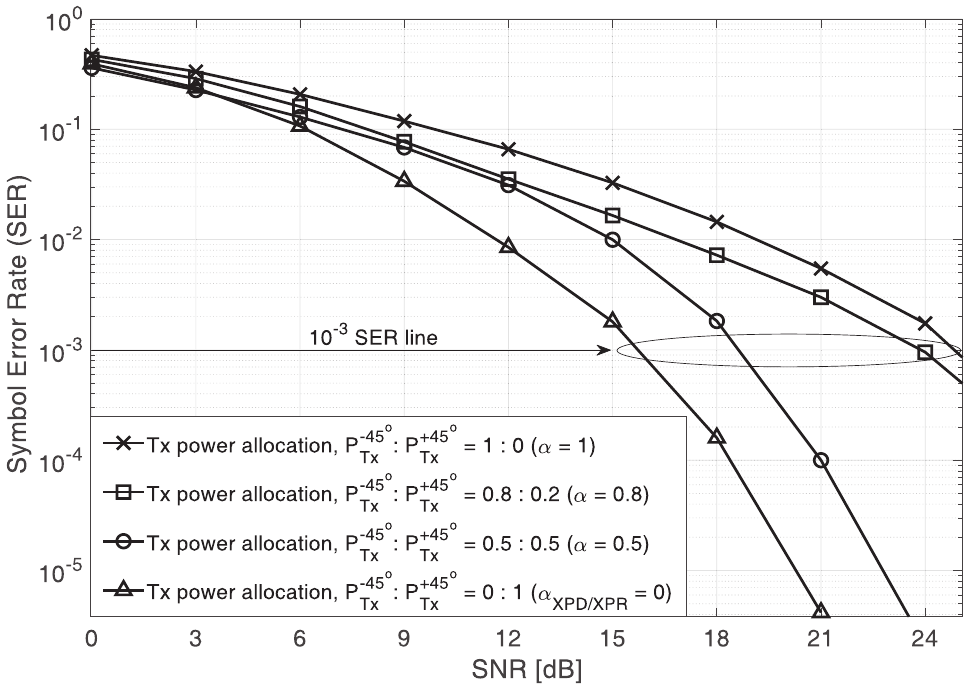}
	\caption{Symbol error rate for different transmit power allocation in the scenario of $45^{\rm o}$ Rx antenna polarization, $\overline{\rm XPD}^{\rm N} = 5.48$ dB, $\overline{\rm XPD}^{\rm P} = -6.26$ dB, $\overline{\rm XPR}^{\rm N} = 5.90$ dB.}
	\label{fig:Statistic_SER_0_RxAnt}
\end{figure}
\begin{figure}[!t]
	\centering
\includegraphics[width=.47\textwidth]{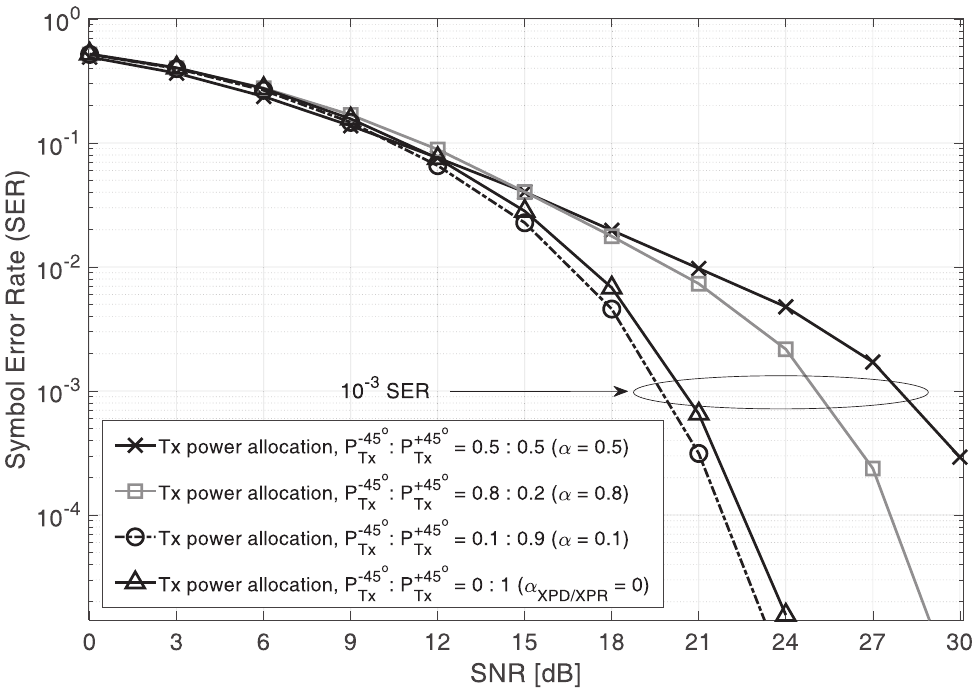}
	\caption{Symbol error rate for different transmit power allocation in the scenario of $45^{\rm o}$ Rx antenna polarization, $\overline{\rm XPD}^{\rm N} = 4.56$ dB, $\overline{\rm XPD}^{\rm P} = -4.15$ dB, $\overline{\rm XPR}^{\rm N} = 4.34$ dB.}
	\label{fig:Statistic_SER_0_RxAnt_WO_SubChAlloc}
\end{figure}

The simulation results apparently show that MPS-Beamforming with the XPD/XPR-aware transmit power allocation, remarkably improves the conventional beamforming based system performance in terms of SER. For instance, in the scenario of Fig. \ref{fig:Statistic_SER_0_RxAnt}, the novel proposed scheme with XPD/XPR-aware transmit power allocation ratio, $\alpha = 0$ and $\beta = 1$ has 9 dB gain of the SNR for $ 10^{-3}$ SER comparing with the scenario of $\alpha = 1$ and $\beta = 0$.  The different Rx antenna polarization and polarization state of the given channel require different transmit power allocation ratio as will be illustrated in this section. It is noteworthy that the selection of a $-45^{\rm o}$ polarization transmit beam achieves substantial SNR gain by 9 dB and 3 dB, respectively, comparing with the selection of a $45^{\rm o}$ polarization transmit beam or utilizing both $-45^{\rm o}$ and $45^{\rm o}$ polarization transmit beams with equal power allocation, in case of the channel state in Fig. \ref{fig:Statistic_SER_0_RxAnt}.

The theoretical XPD/XPR-aware transmit power allocation ratio, $\alpha_{\rm XPD/XPR}$ and $\beta_{\rm XPD/XPR}$ in (\ref{eq:Optimal_alpha}) -- (\ref{eq:Optimal_beta}), is obtained from the estimation of PSI in the given channel of Fig. \ref{fig:Statistic_SER_0_RxAnt_WO_SubChAlloc}. However, we emphasize that the proposed subcarrier assignment algorithm is not applied to the simulation in Fig. \ref{fig:Statistic_SER_0_RxAnt_WO_SubChAlloc}. $N = 48$ different subcarriers that have the highest channel gains are selected for MPS-Beamforming, i.e., PSI is not considered in subcarrier assignment. Even in this case, the simulation results show that MPS-Beamforming still remarkably improves the system performance in terms of SER. In Fig. \ref{fig:Statistic_SER_0_RxAnt_WO_SubChAlloc}, the proposed scheme of XPD/XPR-aware transmit power allocation with $\alpha_{\rm XPD/XPR} = 0$ and $\beta_{\rm XPD/XPR} = 1$ has 7 dB gain of the SNR for $ 10^{-3}$ SER comparing to the case of arbitrarily combining two transmit beams adopting $-45^{\rm o}$ and $45^{\rm o}$ polarization with $\alpha = 0.5$ and $\beta = 0.5$.

On the other hand, the disadvantage of excluding the proposed subcarrier assignment algorithm is clearly described in the simulation of Fig. 5.
One of the important observations in Fig. 5 is that $\alpha_{\rm XPD/XPR}$ and $\beta_{\rm XPD/XPR}$ obtained by the proposed XPD/XPR-aware transmit power allocation scheme do not exhibit the best SER curve, since the MPS-Beamforming in the OFDM system does not apply the pertinent subcarrier assignment algorithm proposed and described in Table \ref{Tbl:SubcarrierAlloc}. The MPS-Beamforming with $\alpha = 0.1$ and $\beta = 0.9$ results in the better SNR curve than the MPS-Beamforming with $\alpha_{\rm XPD/XPR}$ and $\beta_{\rm XPD/XPR}$ resulting from the proposed scheme, i.e., 0.7 dB SNR gain at $10^{-3}$ SER.

\begin{figure}[!t]
	\centering
\includegraphics[width=.50\textwidth]{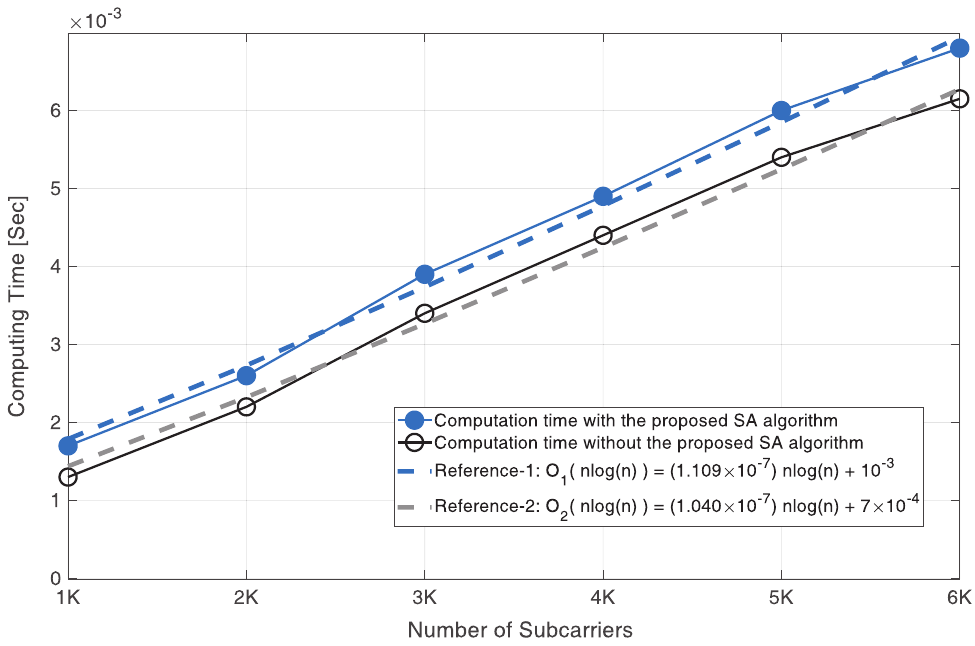}
	\caption{Complexity/computation time of the proposed subcarrier assignment (SA) algorithm for varying number of OFDM subcarriers.}
	\label{fig:Complexity_SubChAlloc}
\end{figure}

For the remaining simulation results with respect to the SER in Figs. \ref{fig:SER_-45_RxAnt} -- \ref{fig:SER_45_RxAnt_Statistical_Ch}, we emphasize that the proposed new subcarrier assignment algorithm in Table \ref{Tbl:SubcarrierAlloc} is adopted to pertinently support the XPD/XPR-aware transmit power allocation scheme and to result in the best SER performance of MPS-Beamforming in the OFDM system. The complexity of the proposed subcarrier assignment algorithm in terms of the computation time is illustrated in Fig. \ref{fig:Complexity_SubChAlloc}. The simulation results are in great agreement with the analysis of the complexity. The expected computation time for $n$ OFDM subcarriers, $T(n)$ is shown to be the Big-O function of $n \ln n$, i.e., \cite{Intro_to_Algorithms_Book}
\begin{equation}
 \label{eq:Complexity_SubChAlloc}
  T(n) = O(n \ln n).
\end{equation}
The computation with the proposed XPD-aware OFDM subcarrier assignment algorithm shows good agreement with Reference-1 in the legend of Fig. \ref{fig:Complexity_SubChAlloc}.

The simulation of the proposed subcarrier assignment algorithm utilizes MATLAB's sort function, which is based on Quicksort with the expectation of the complexity, $O(n \ln n)$. Considering the benefit of applying the proposed subcarrier assignment algorithm to be shown in the remaining simulation results, the cost in terms of the increased complexity is acceptable. The trade-off is the significant improvement in SNR, e.g., 4 dB SNR gain in statistical channel realization as will be presented in the remaining simulation results. The computation time without the proposed subcarrier assignment algorithm also follows Big-O function as described in Reference-2 in the legend of Fig. \ref{fig:Complexity_SubChAlloc}.
  Further, although the total number of subcarriers increases from 1K ($1024$) to 6K ($6 \times 1024$), the increase of the additional computation time is marginal.


The analytic results of MPS-Beamforming are provided in Figs. (\ref{fig:Eccentricity}) -- (\ref{fig:Rx_22.5}).
It is shown that the instantaneous Rx XPD affects the rotation angle of the Rx polarization ellipse at the UE. A variety of phase difference, $\Delta$ in (\ref{eq:r}) -- (\ref{eq:Delta}) along with varying ${\rm XPD} = (E^{-45^{\rm o}}/E^{+45^{\rm o}}) ^2$ are taken into account. Further, several Rx antenna polarization angles at the UE are considered to reflect the movement of the UE on the rotation of the Rx antenna orientation. Although some prior research defines XPD in slightly different manners, we follow the prevalent conventional definition of instantaneous or statistical XPD, which is described in Section \ref{sec:MPS-Beamforming}.

The squared eccentricity is a function of XPD as described in (\ref{eq:Eccentricity}). The curves of the squared eccentricity $\epsilon^2$ for varying XPD and a variety of phase difference $\Delta$ are illustrated in Fig.\ref{fig:Eccentricity}. Four scenarios with respect to $\Delta$ are considered to analyze the behavior of squared eccentricity for varying XPD. In the ideal scenario that $\Delta = 0$, the squared eccentricity is unity, i.e., $\epsilon^2 = 1$; it corresponds to linear polarization created by the superposition of two synchronized transmit beams. This scenario is omitted in Fig. \ref{fig:Eccentricity} since the behavior is straightforward from (\ref{eq:Eccentricity}). On the other hand, if the phase difference is greater than $45^{\rm o}$, squared eccentricity at XPD = 1 is significantly less than unity. Even for the large phase difference, the squared eccentricity can be greater than 0.9 yielding the narrow ellipse of polarization. As mentioned in Section \ref{sec:Pol_ellipse}, as $\epsilon ^2$ approaches unity, the polarization ellipse converges to a linear line, which is the desired scenario.

\begin{figure}[!t]
	\centering
\includegraphics[width=.467\textwidth]{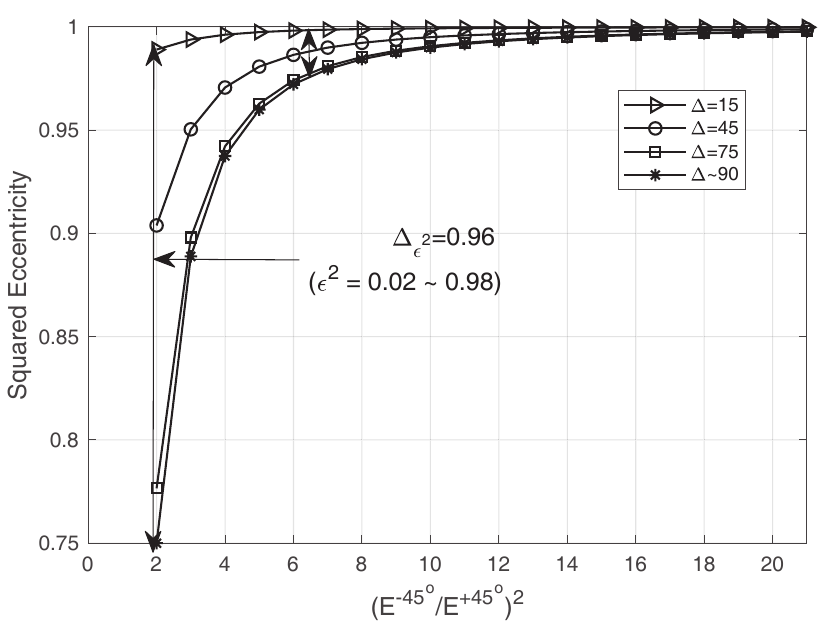}
	\caption{Squared eccentricity, $\epsilon ^2$ for the varying XPD.}
	\label{fig:Eccentricity}
\end{figure}
\begin{figure}[!t]
	\centering
\includegraphics[width=.457\textwidth]{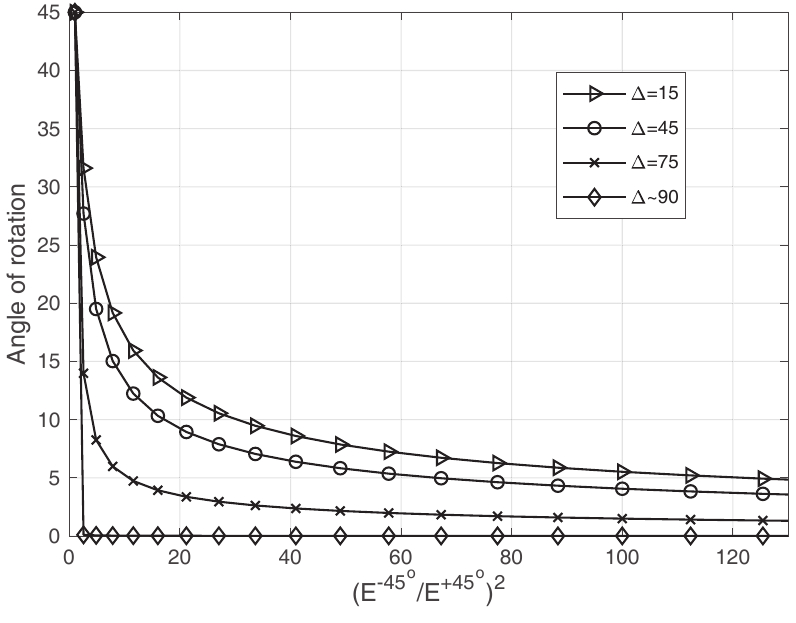}
	\caption{Angle of rotation, $\theta$ for the varying XPD.}
	\label{fig:Theta}	
\end{figure}
\begin{figure}[!t]
	\centering
\includegraphics[width=.457\textwidth]{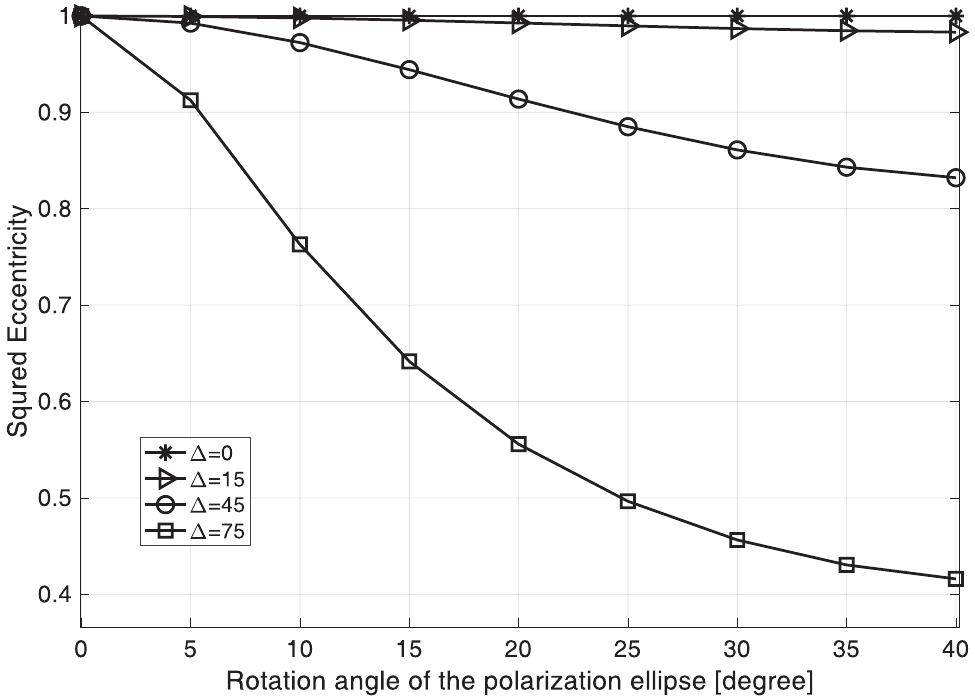}
	\caption{Direct relation between squared eccentricity, $\epsilon ^2$ and rotation angle of the polarization ellipse, $\theta$.}
	\label{fig:Eccentricity_Theta}	
\end{figure}

The polarization angle of the superimposed received signal at the UE, $\theta$, is a function of XPD as demonstrated in (\ref{eq:Theta}). The behavior of $\theta$ is depicted for varying XPD and a variety of phase difference in Fig. \ref{fig:Theta}. When XPD is unity in linear scale, it yields $\theta = 45^{\rm o}$ for all the scenarios with respect to the phase difference. In the coordinates described in Figs. \ref{fig:System_Model} and \ref{fig:Polarization_Ellipse}, $E^{-45^{\rm o}}$ is the component for $x$-axis; therefore, as XPD increases, the polarization on $x$-axis increases, corresponding to the scenario that $\theta$ converges to zero.

\begin{figure}[!t]
	\centering
\includegraphics[width=.47\textwidth]{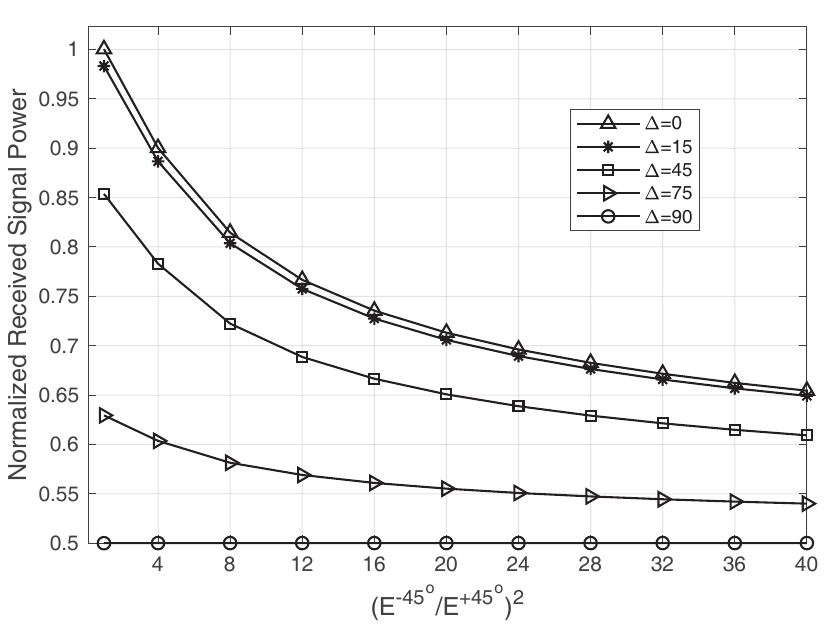}
	\caption{Normalized received signal power for $0^{\rm o}$ Rx antenna polarization.}
	\label{fig:Rx_45}
\end{figure}
\begin{figure}[ht]
	\centering
\includegraphics[width=.48\textwidth]{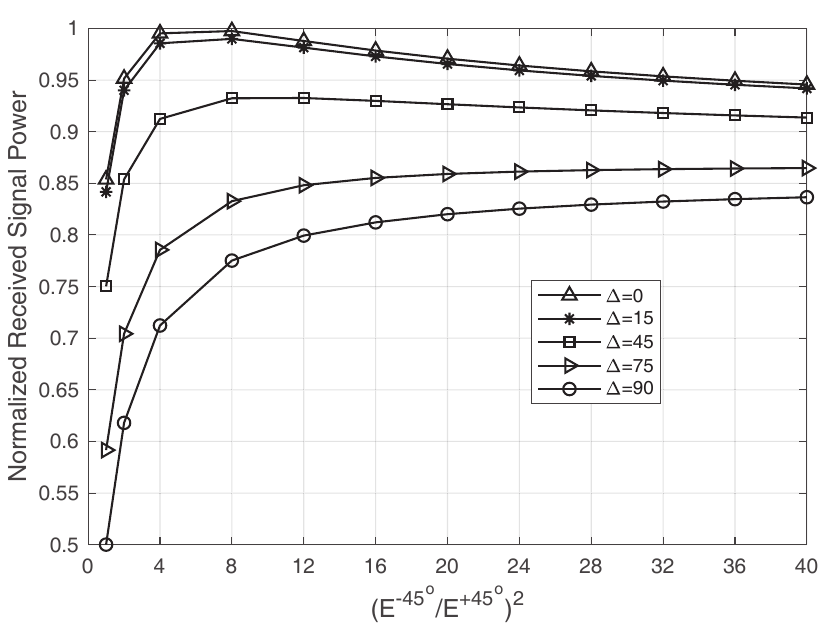}
	\caption{Normalized received signal power for $-22.5^{\rm o}$ Rx antenna polarization.}
	\label{fig:Rx_22.5}
\end{figure}


The direct relation between the squared eccentricity, $\epsilon ^2$ and rotation angle of the polarization ellipse, $\theta$ in (\ref{eq:Eccen(Theta)}) is portrayed in Fig. \ref{fig:Eccentricity_Theta}. As the polarization ellipse rotates further up to $45^{\rm o}$, the squared eccentricity tends to decrease, i.e., the polarization gets farther from the linear polarization, in particular, in the scenario that the phase difference $\Delta$ is greater than 45 such as $\Delta = 75^{\rm o}$. However, for the phase difference less than $45^{\rm o}$ ($\Delta \le 45^{\rm o}$), the rotation angle of the polarization ellipse do not affect the squared eccentricity seriously as exhibited in Fig. \ref{fig:Eccentricity_Theta}.

The next set of simulation results provided in Figs. \ref{fig:Rx_45} -- \ref{fig:Rx_22.5} depict the normalized received signal power for the limited total transmit power. XPD is varied from 1 to 40 in linear scale; further, different scenarios with diverse Rx antenna polarization angles of $0^{\rm o}$ and $-22.5^{\rm o}$ at the UE are taken into account.  It is noteworthy that the curves in Fig. \ref{fig:Rx_45} exhibit a different tendency from the curves in Fig. \ref{fig:Rx_22.5}, in terms of peak points. The reason for the different tendency is that different Rx antenna polarization requires different XPD of the received signal. Elaborate analyses are provided in the sequel.


The normalized received signal power at the Rx with $0^{\rm o}$ Rx antenna polarization is illustrated in Fig. \ref{fig:Rx_45}. In the scenario that $\Delta = 0^{\rm o}$, and ${\rm XPD} = 1$; the polarization angle of the received signal, $\theta$ is $45^{\rm o}$. 
That is, the polarization of the received signal is exactly aligned with the Rx antenna polarization; thus, the received signal power will be maximized. As illustrated in Fig. \ref{fig:Rx_45}, comparing to the scenario of no MPS-Beamforming, i.e., $E^{+45^{\rm o}} = 0$ or equivalently, XPD = $\infty$, MPS-Beamforming with XPD = 1 can save 35$\%$ of transmission power in coherent MPS-Beamforming ($\Delta = 0^{\rm o}$).
Nonetheless, when $\Delta < 45^{\rm o}$, the UE can still expect approximately 85$\%$ of the maximum received signal power. They are impressive results describing that the proposed MPS-Beamforming is significantly energy-efficient, i.e., we can minimize power loss when the polarization ellipse is well aligned to the Rx antenna polarization.


\begin{figure}[!t]
	\centering
\includegraphics[width=.47\textwidth]{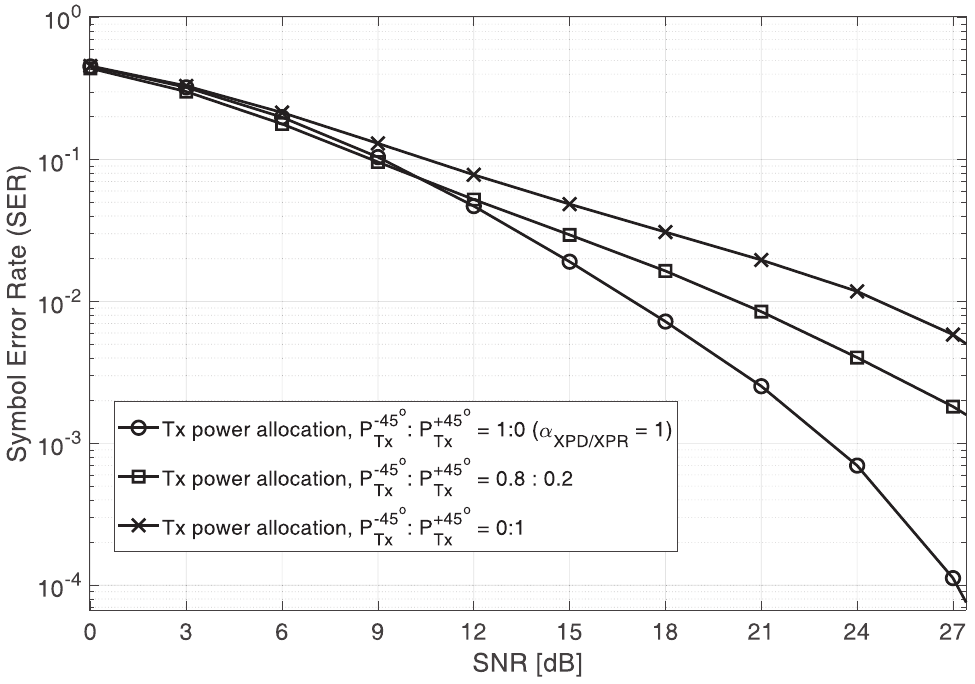}
	\caption{Symbol error rate for different transmit power allocation in the scenario of $-45^{\rm o}$ Rx antenna polarization.}
	\label{fig:SER_-45_RxAnt}
\end{figure}
\begin{figure}[!t]
	\centering
\includegraphics[width=.47\textwidth]{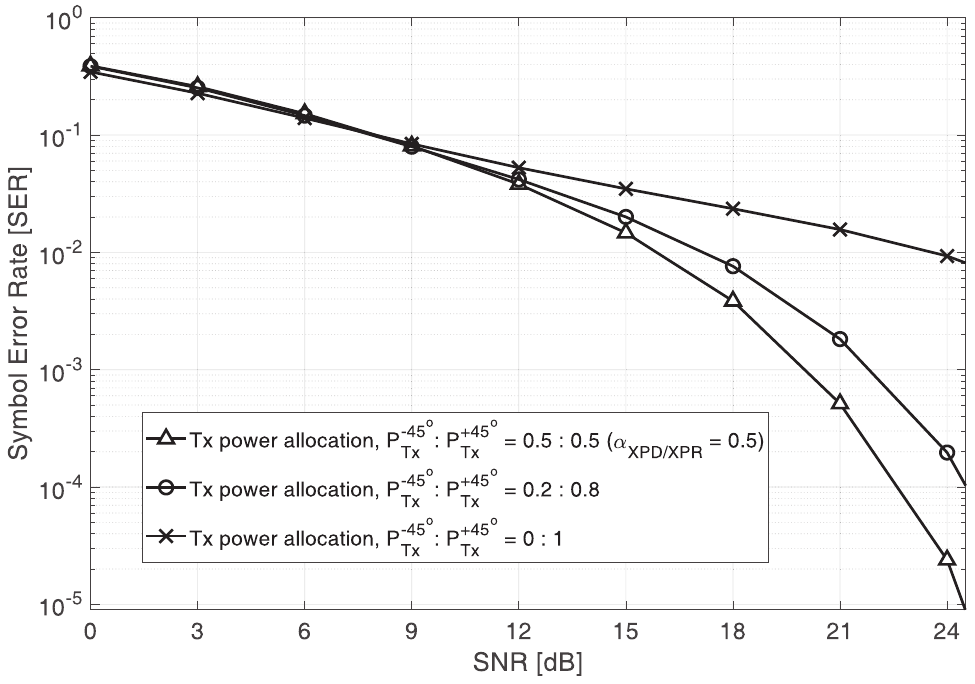}
	\caption{Symbol error rate for different transmit power allocation in the scenario of $0^{\rm o}$ Rx antenna polarization.}
	\label{fig:SER_0_RxAnt}
\end{figure}


The curves of normalized received signal power for the Rx antenna polarization angle, $-22.5^{\rm o}$ is depicted in Fig. \ref{fig:Rx_22.5}. It is noteworthy that the UE cannot have a peak for the normalized received signal power at the point, ${\rm XPD} = 1$. Eventually, it will be matched when ${\rm XPD} = \tan^2 \big( 45^{\rm o} - (-22.5^{\rm o}) \big) = 5.83$, which is in good agreement with Fig. \ref{fig:Rx_22.5}, where the curve with $\Delta = 0^{\rm o}$ has a peak of the normalized received signal power at ${\rm XPD} = 5.8$.

With the observation that different Rx antenna polarization requires different XPD of the received signal in Figs. \ref{fig:Rx_45} -- \ref{fig:Rx_22.5}, the different XPD/XPR-aware transmit power allocation ratios for different Rx antenna polarization are described and verified in Figs. \ref{fig:SER_-45_RxAnt} -- \ref{fig:SER_0_RxAnt}.
The Rx antenna polarization angles are $-45^{\rm o}$, and $0^{\rm o}$ in Figs. \ref{fig:SER_-45_RxAnt} and \ref{fig:SER_0_RxAnt}, respectively. The corresponding XPD/XPR-aware transmit power allocation ratios are, respectively, $\alpha_{\rm XPD/XPR} = 1$ and $\alpha_{\rm XPD/XPR} = 0.5$; it is noteworthy that $\alpha_{\rm XPD/XPR} = 1$ is the case of single transmit beam with $-45^{\rm o}$ polarization. In the scenario of $0^{\rm o}$ Rx antenna polarization in Fig. \ref{fig:SER_0_RxAnt}, the value of $\alpha$ which satisfies (\ref{eq:Pol_matching}) with (\ref{eq:XPD_MPS_Definition}) -- (\ref{eq:XPD_MPS_Fn_of_alpha}) is 0.5; i.e., $\alpha_{\rm XPD/XPR} = 0.5$ in (\ref{eq:Optimal_alpha}) -- (\ref{eq:Optimal_beta}) as described in the legend of Fig. \ref{fig:SER_0_RxAnt}.
In each scenario of Rx antenna polarization, the proposed XPD/XPR-aware transmit power allocation with $\alpha_{\rm XPD/XPR}$ and $\beta_{\rm XPD/XPR} = 1 - \alpha_{\rm XPD/XPR}$  in (\ref{eq:Optimal_alpha}) -- (\ref{eq:Optimal_beta}), outperforms that of any other choice of $\alpha$.

Finally, two sets of simulation results in both deterministic and statistical channels are presented in Figs. \ref{fig:SER_40_RxAnt_Deterministic_Ch} -- \ref{fig:SER_45_RxAnt_Statistical_Ch}. It is noteworthy that the best SER curve in each figure is resulting from the proposed subcarrier assignment algorithm along with the XPR/XPR-aware transmit power allocation scheme.
The PSI of the channels are, $\overline{\rm XPD}^{\rm N} = 4.56$ dB, $\overline{\rm XPD}^{\rm P} = -4.15$ dB, and $\overline{\rm XPR}^{\rm N} = 4.34$ dB, although the detailed channel impulse response of one channel realization is different from others. For the same PSI of the channel in a statistical sense, Figs. \ref{fig:SER_40_RxAnt_Deterministic_Ch} and \ref{fig:SER_40_RxAnt_Statistical_Ch} are in the scenario that the Rx antenna polarization is $5^{\rm o}$; whereas, the scenario of Figs. \ref{fig:SER_45_RxAnt_Deterministic_Ch} and \ref{fig:SER_45_RxAnt_Statistical_Ch} is $0^{\rm o}$ Rx antenna polarization.

The estimated $\alpha_{\rm XPD/XPR}$ based on the proposed XPD/XPR-aware transmit power allocation scheme, is 0.28 in the scenario of Figs. \ref{fig:SER_40_RxAnt_Deterministic_Ch} and \ref{fig:SER_40_RxAnt_Statistical_Ch}, and it is verified that the MPS-Beamforming with the XPD/XPR-aware transmit power allocation ratio, $\alpha_{\rm XPD/XPR}$ and the proposed subcarrier assignment algorithm results in the best SER curve in Figs. \ref{fig:SER_40_RxAnt_Deterministic_Ch} and \ref{fig:SER_40_RxAnt_Statistical_Ch}. In statistical channels of Fig. \ref{fig:SER_40_RxAnt_Statistical_Ch}, we can observe around 3 dB and 2.5 db SNR gain for $10^{-5}$ and $10^{-4}$ SER, respectively, in a statistical sense. That is, the proposed MPS-Beamforming with the novel transmit power allocation scheme and the new subcarrier assignment algorithm outperforms the single transmit beamforming with $-45^{\rm o}$ polarization with the highest SNR gain among a variety of $\alpha$'s. It is worth mentioning that in a given particular channel environment such as the deterministic channel in Fig. \ref{fig:SER_40_RxAnt_Deterministic_Ch}, the SNR gain can be higher than the statistical SNR gain. For instance, Fig. \ref{fig:SER_40_RxAnt_Deterministic_Ch} represents 6 dB and 5 dB SNR gain for $10^{-5}$ and $10^{-4}$ SER, respectively.

\begin{figure}[t]
	\centering
\includegraphics[width=.47\textwidth]{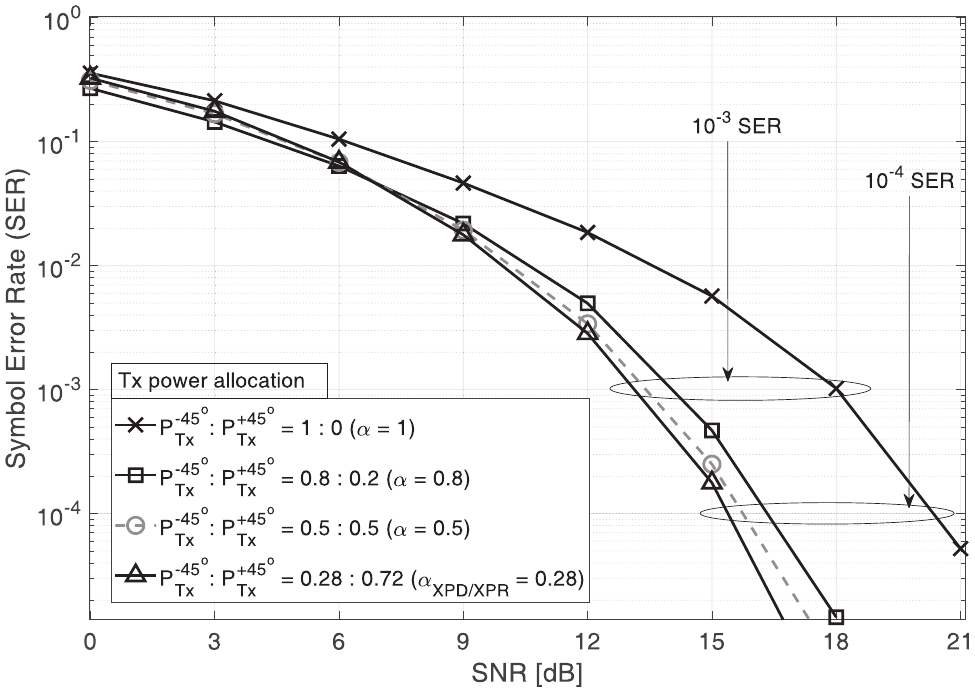}
	\caption{Verification of the proposed XPD/XPR-aware transmit power allocation and subcarrier assignment scheme showing the best symbol error rate in a deterministic channel; Rx antenna polarization is $5^{\rm o}$.}
	\label{fig:SER_40_RxAnt_Deterministic_Ch}
\end{figure}
\begin{figure}[!t]
	\centering
\includegraphics[width=.47\textwidth]{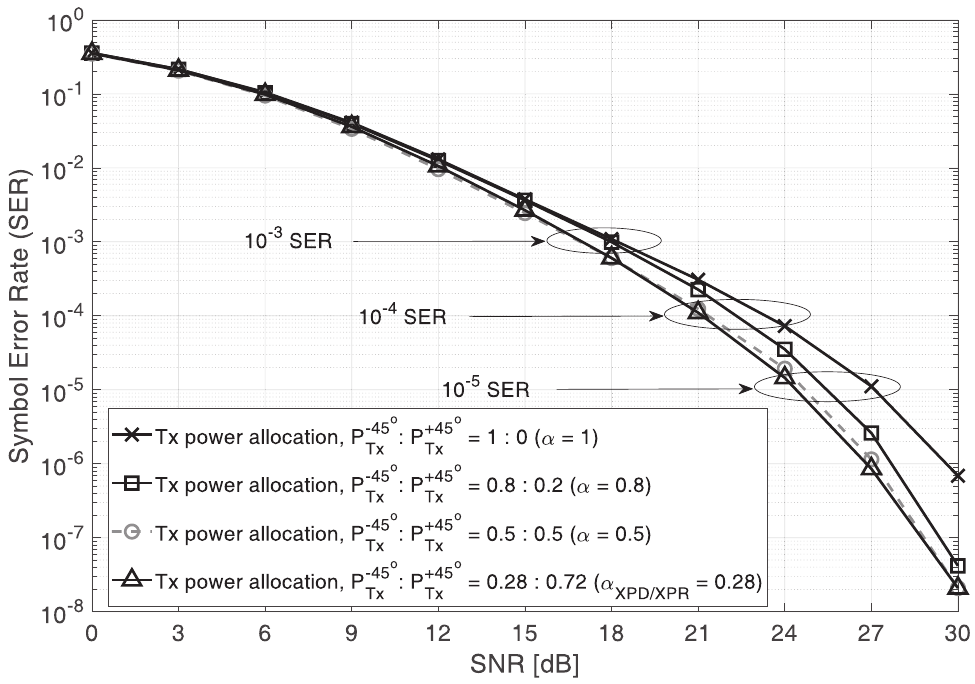}
	\caption{Verification of the proposed XPD/XPR-aware transmit power allocation and subcarrier assignment scheme showing the best symbol error rate in statistical channels; Rx antenna polarization is $5^{\rm o}$.}
	\label{fig:SER_40_RxAnt_Statistical_Ch}
\end{figure}

\begin{figure}[t]
	\centering
\includegraphics[width=.47\textwidth]{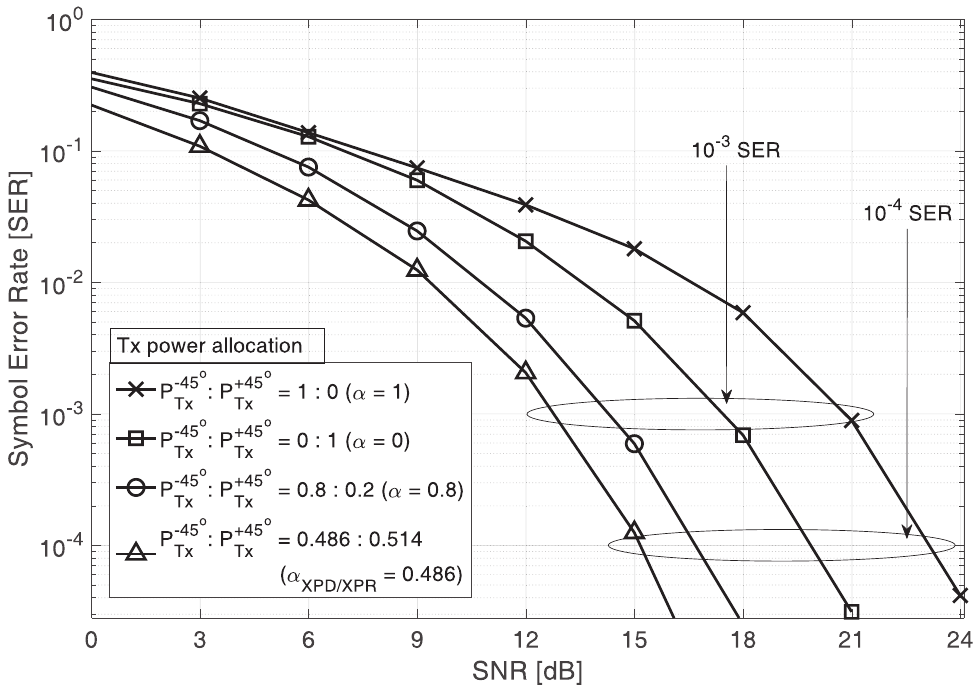}
	\caption{Verification of the proposed XPD/XPR-aware transmit power allocation and subcarrier assignment scheme showing the best symbol error rate in a deterministic channel; Rx antenna polarization is $0^{\rm o}$.}
	\label{fig:SER_45_RxAnt_Deterministic_Ch}
\end{figure}
\begin{figure}[t]
	\centering
\includegraphics[width=.47\textwidth]{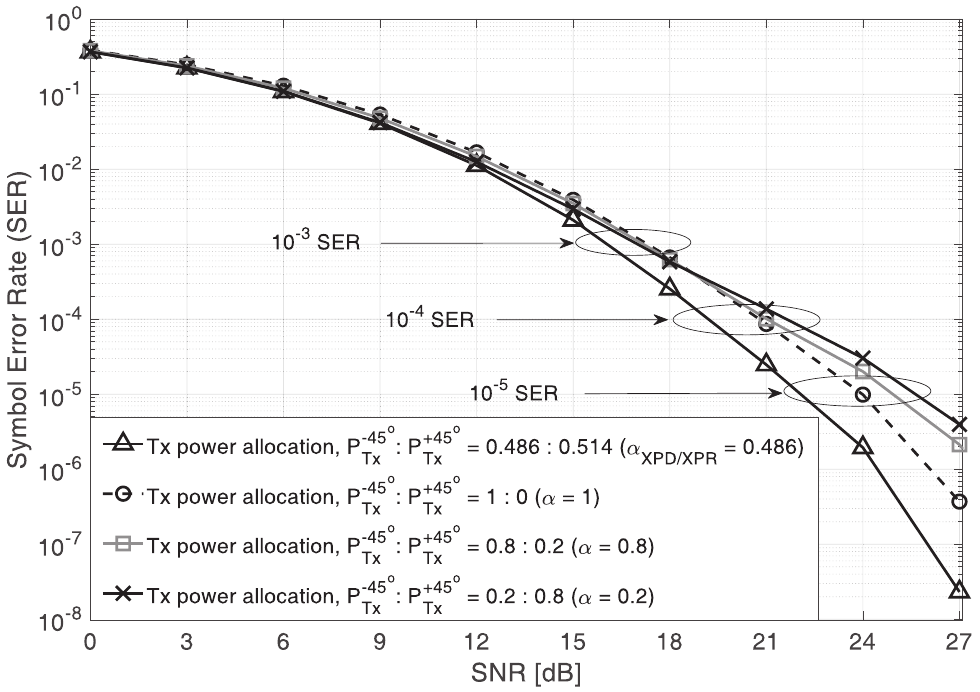}
	\caption{Verification of the proposed XPD/XPR-aware transmit power allocation and subcarrier assignment scheme showing the best symbol error rate in statistical channels; Rx antenna polarization is $0^{\rm o}$.}
	\label{fig:SER_45_RxAnt_Statistical_Ch}
\end{figure}

The Rx antenna can be rotated owing to the movement of the UE, and Figs. \ref{fig:SER_45_RxAnt_Deterministic_Ch} and \ref{fig:SER_45_RxAnt_Statistical_Ch} are in the scenario of $0^{\rm o}$ Rx antenna polarization. That is, the Rx antenna that has to receive the signal from the Tx is rotated by $5^{\rm o}$ from the scenario of Figs. \ref{fig:SER_40_RxAnt_Deterministic_Ch} and \ref{fig:SER_40_RxAnt_Statistical_Ch}. It is remarkable that the estimated $\alpha_{\rm XPD/XPR}$ based on the proposed XPD/XPR-aware transmit power allocation scheme, is substantially changed for the change of Rx antenna polarization by $5^{\rm o}$ at the aspect of the UE. The XPD/XPR-aware transmit power allocation ratio for the scenario of Figs. \ref{fig:SER_45_RxAnt_Deterministic_Ch} and \ref{fig:SER_45_RxAnt_Statistical_Ch} is $\alpha_{\rm XPD/XPR} = 0.486$; meanwhile, for Figs. \ref{fig:SER_40_RxAnt_Deterministic_Ch} and \ref{fig:SER_40_RxAnt_Statistical_Ch}, $\alpha = 0.28$, which is regarded as significant change. That is, as $\alpha_{\rm XPD/XPR}$ changes from 0.28 to 0.486, the MPS-Beamforming OFDM system imposes 74 $\%$ more transmission power on the transmit beamforming with $-45^{\rm o}$ polarization.

In the similar fashion with Figs. \ref{fig:SER_40_RxAnt_Deterministic_Ch} and \ref{fig:SER_40_RxAnt_Statistical_Ch}, the SNR gain of 4 dB and 2 db for $10^{-5}$ and $10^{-4}$ SER, respectively, are observed in a statistical sense, as illustrated by the simulation with statistcal channels in Fig. \ref{fig:SER_45_RxAnt_Statistical_Ch}. The proposed MPS-beamforming with the new subcarrier assignment algorithm along with the novel transmit power allocation scheme outperforms the single transmit beamforming with $-45^{\rm o}$ polarization, exhibiting the highest SNR gain among a variety of $\alpha$'s. It is noteworthy that a provided channel environment can show the more significant SNR gain such as around 8 dB for both $10^{-3}$ and $10^{-4}$ SER in Fig. \ref{fig:SER_45_RxAnt_Deterministic_Ch}.

\section{Conclusion} \label{sec:Conclusion}
This paper is the first to propose a novel scheme of MPS-Beamforming with XPD/XPR-aware transmit power allocation and the pertinent XPD/XPR-aware OFDM subcarrier assignment. Based on the 5G antenna panel structure agreed by the 5G NR standard society, the proposed scheme can be utilized to have the significant benefit of improving SER or SNR gain to satisfy the required SER; therefore, energy efficiency. The transmit power allocation ratio is theoretically obtained based on the given PSI of the wireless channel such as XPD and XPR. This theoretical scheme is verified by abundant simulations with a variety of scenarios in terms of PSI and Rx antenna polarization. Comprehensive simulation results show the remarkable improvement of the system; the noteworthy results include 8 dB SNR gain for $10^{-4}$ SER in a given realistic scenario, i.e., in a deterministic channel. Further, the long-term simulations with abundant statistical channels also exhibit the SNR gain of 4 dB for $10^{-5}$ in a provided realistic scenario of PSI and Rx antenna polarization in a statistical sense. The complexity of the proposed OFDM subcarrier assignment algorithm is acceptable considering the aforementioned benefits. The proposed scheme of MPS-Beamforming has the significant potential to be utilized in the advanced revision of 5G NR, beyond-5G or 6G wireless communication standards and systems in the future.

\section*{Acknowledgment}
The authors would like to thank the Editor and reviewers for their constructive comments that have helped improve the quality of this paper. Further, this work was supported financially by Aerospace Corporation (G255621100) and CSULB Foundation Fund (RS261-00181-10185); the authors appreciate the support.

\bibliographystyle{IEEEtran}
\bibliography{MPS_Beamforming_TWC_Rev36}

\begin{IEEEbiography}[{\includegraphics[width=1in,height=1.25in,clip,keepaspectratio]
{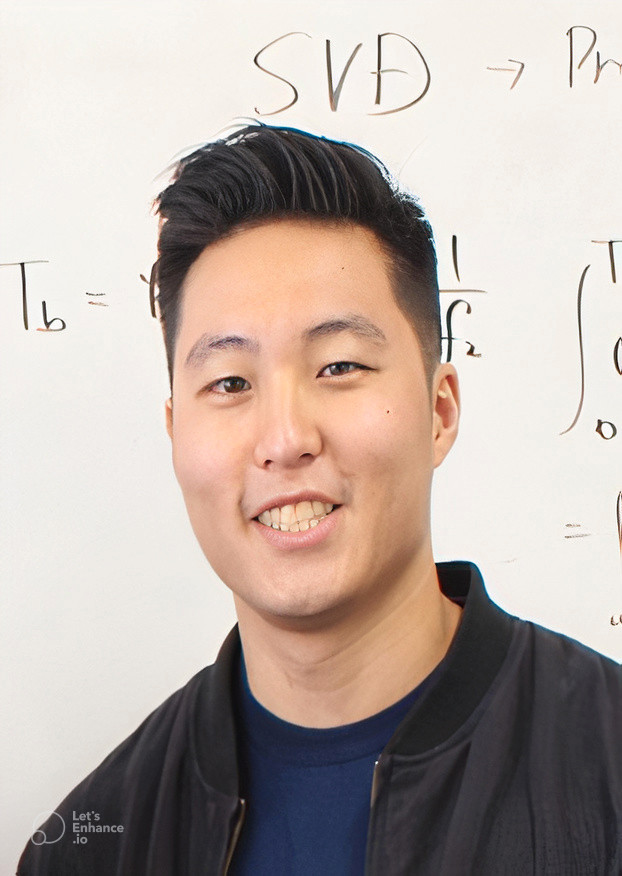}}]{Paul Seungcheol Oh} (S'19) received the B.Sc degree from California State University Long Beach with Magna Cum Laude in 2019. He is fulfilling research in the field of 5G and 6G Wireless Communications in Wireless Systems Evolution Laboratory (WiSE Lab) under the supervision of professor Sean Kwon, coauthor of this paper during the M.Sc program of California State University Long Beach.

He was a recipient of 6 term honors, President's list (three times) and Dean's list (three times). He received the First Place Award in Senior Capstone Design Team Project Competition of Electrical Engineering Department at California State University Long Beach, on May 2019, where the Panel of Judges is from Boeing, Aerospace Corporation, and Southern California Edison along with University Faculty. He was also a recipient of the Best Paper Award from IEEE Green Energy and Smart Systems conference (IGESSC), 2018.

\end{IEEEbiography}

\begin{IEEEbiography}[{\includegraphics[width=1in,height=1.25in,clip]
{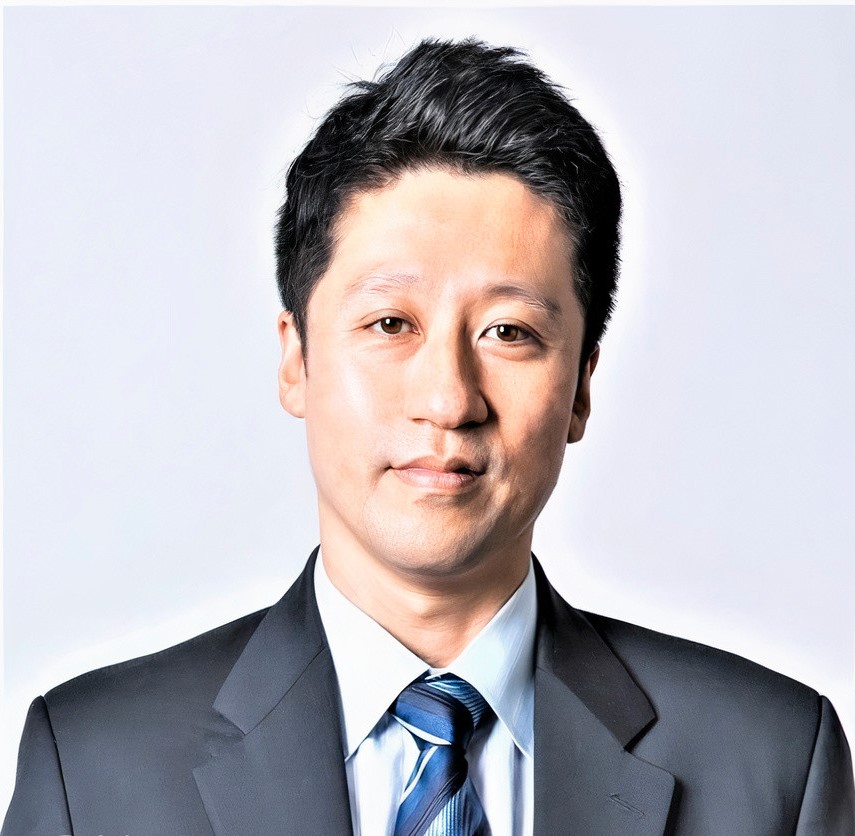}}]{Sean (Seok-Chul) Kwon} (S'09-M'14) received his Ph.D. degree from Georgia Institute of Technology in Atlanta, US on December 2013. Before that, Dr. Kwon received the M.Sc degree from the University of Southern California, Los Angeles, US in 2007; the B.Sc degree from Yonsei University, Seoul, South Korea in 2001.

He performed research at Intel's Next Generation and Standards Division in Communication and Devices Group during 2015 to 2017, where he contributed to 5G MIMO standards, the associated system design and patents. He has been Assistant Professor at California State University Long Beach; and Chief Director/Founder of Wireless Systems Evolution Laboratory (WiSE Lab) since 2017.

He also conducted postdoctoral research at Wireless Devices and Systems Group, University of Southern California in 2014 - 2015. He worked on CDMA common air interface focusing on layer-3 protocols at the R$\&$D Institute of Pantech co., Ltd, Seoul, South Korea in 2001 to 2004. He was involved in several projects such as a DARPA project; an US Army Research Lab project; and 6 mobile-station projects for Motorola and Sprint, which were successfully on the market. His current research interests are in 5G and beyond-5G wireless system/network design; satellite communications; polarization diversity and multiplexing; body area network such as wearable computing; wireless channel modeling and its applications; and application of machine learning for wireless communications and signal processing.
He was a recipient of 3 Best Paper Awards from IEEE Green Energy and Smart Systems Conference (IGESSC), 2018, 2019 and 2020.
\end{IEEEbiography}

\end{document}